\newtheorem{theorem}{Theorem}
\newtheorem{corollary}{Corollary}[theorem]
\newtheorem{lemma}{Lemma}[theorem]
\newcommand {\nc} {\newcommand}
\nc {\IR} [1]{\textcolor{red}{#1}}
\nc {\IB} [1]{\textcolor{blue}{#1}}
\nc {\IM} [1]{\textcolor{magenta}{#1}}
\begin{document}

\preprint{N3AS-24-010}

\preprint{RIKEN-iTHEMS-Report-24}

\preprint{LA-UR-24-22665}

\title{Exact block encoding of imaginary time evolution with universal quantum neural networks}
\author{Ermal Rrapaj}
\email{ermalrrapaj@lbl.gov}
\affiliation{Lawrence Berkeley National Laboratory, One Cyclotron RD, Berkeley, CA, 94720, USA}
\affiliation{Department of Physics, University of California, Berkeley, CA 94720, USA}
\affiliation{RIKEN iTHEMS, Wako, Saitama 351-0198, Japan}

\author{Evan Rule}
\email{erule@berkeley.edu}
\affiliation{Department of Physics, University of California, Berkeley, CA 94720, USA}
\affiliation{Theoretical Division, Los Alamos National Laboratory, Los Alamos, NM 87545, USA}




\begin{abstract}
We develop a constructive approach to generate quantum neural networks capable of representing the exact thermal states of all many-body qubit Hamiltonians. The Trotter expansion of the imaginary-time propagator is implemented through an exact block encoding by means of a unitary, restricted Boltzmann machine architecture. Marginalization over the hidden-layer neurons (auxiliary qubits) creates the non-unitary action on the visible layer. Then, we introduce a unitary deep Boltzmann machine architecture, in which the hidden-layer qubits are allowed to couple laterally to other hidden qubits. We prove that this wave function ansatz is closed under the action of the imaginary-time propagator and, more generally, can represent the action of a universal set of quantum gate operations.
We provide analytic expressions for the coefficients for both architectures, thus enabling exact network representations of thermal states without stochastic optimization of the network parameters. In the limit of large imaginary time, the ansatz yields the ground state of the system. The number of qubits grows linearly with the system size and total imaginary time for a fixed interaction order. Both networks can be readily implemented on quantum hardware via mid-circuit measurements of auxiliary qubits. If only one auxiliary qubit is measured and reset, the circuit depth scales linearly with imaginary time and system size, while the width is constant. Alternatively, one can employ a number of auxiliary qubits linearly proportional to the system size, and circuit depth grows linearly with imaginary time only.
\end{abstract}

\maketitle

\section{Introduction}
Quantum many-body systems are associated with exponentially large Hilbert spaces, which create fundamental challenges for even the most powerful supercomputers. In many cases, the goal is to understand the properties of the ground state or first few excited states, the dynamical evolution of the system given some initial conditions, and/or the thermal behavior at some finite temperature. To tackle such problems, one usually defines a variational ansatz for the wave function or density matrix, with parameters tuned to represent (as close as possible) the physical system under study. Some examples in this regard range from mean-field approximations to more complicated methods such as tensor network ~\cite{White:2003,Verstraete:2004,Vidal:2007,Verstraete:2008,SCHOLLWOCK:2011,Orus:2014} and neural network states~\cite{LeCun:2015,Carleo:2017,Deng:2017,Carrasquilla:2017,Hibat:2020}. For the ansatz to be efficient, the number of free parameters should increase at most as a polynomial function of the number of particles (or degrees of freedom) in the system. In addition, one also needs to devise learning methods to optimize these variational parameters. Many optimizers have been tested in practice, including gradient-based methods like conjugate gradient~\cite{Hestenes:1952}, adaptive moment estimation~\cite{Kingma:2017}, simultaneous perturbation stochastic approximation ~\cite{Spall:1992}, and the natural gradient~\cite{Amari:1998}, as well as gradient-free methods such as Nelder-Mead~\cite{Nelder:1965}. 

A class of very powerful schemes for studying many-body problems is represented by Monte Carlo (MC) methods, which allow for the calculation of properties of quantum states through sampling. These methods are particularly useful when the wave function can be represented by a positive probability density up to an overall normalization constant~\cite{Yokoyama:1987,Giamarchi:1991,KOONIN:1997,Sorella:2005,Roggero:2014,Wlazowski:2014,Carlson:2015,Tews:2020}. The partition function of the interacting problem is cast into a sum (or integral) over configurations in a chosen basis and, with a suitable importance-sampling scheme, one can (in principle) cover the exponentially large configuration space in polynomial time. 

Given a Hamiltonian $H$, one method for determining the ground state is to evolve the system with the imaginary-time propagator
\begin{equation}
    \ket{\Psi(\tau)}=e^{-\tau H}\ket{\Psi_0}
    \label{eq:im_time_prop}
\end{equation}
for some large time $\tau>0$. Provided the initial state $\Psi_0$ is not orthogonal to the ground state and an energy gap $\Delta E$ exists between the ground state(s) and the first excited state(s), then the evolved state $\Psi(\tau)$ converges to the ground state of $H$ for $\tau \gg 1/\Delta E$. Thus, the problem of determining the ground state is equivalent to finding an efficient representation of the action of the imaginary-time propagator on the initial state.

The Diffusion Monte Carlo (DMC) and Green’s Function Monte Carlo
(GFMC) methods are ways of exactly solving the ground state by means of the stochastic procedure of MC sampling~\cite{Kalos:1962,Anderson:1975,Ceperley:1980}. The ground state of many physical systems of interest is characterized by correlations among particles, and therefore the initial-state ansatz $|\Psi_0 \rangle$ should also contain correlations. As the particle number increases, so do the degrees of freedom, and the number of parameters required to specify the ansatz grows very large. For instance, in the case of nuclear Hamiltonians the interaction contains quadratic spin and isospin and tensorial operators, and the many body wave function cannot be written
as a product of single-particle spin-isospin states. By considering all possible nucleon pairs in the systems, the number of possible
spin-states grows exponentially with the number of nucleons. To perform a DMC calculation with standard nuclear Hamiltonians, it
is then necessary to sum over all possible single-particle spin-isospin states of the
system to build the trial wave function used for propagation. This is the standard
approach in GFMC calculations for nuclear systems~\cite{Gandolfi:1987,Pudliner:1995,Piper:2005,Carlson:2003,Carlson:2015}.

An alternative approach is the Auxiliary Field Monte Carlo (AFMC) method, where the imaginary-time propagator $e^{-\tau H}$ is rewritten to reduce the dependence on spin and isospin operators from quadratic to linear by using the Hubbard-
Stratonovich transformation~\cite{SCHMIDT:1999,Rubenstein:2012,LeBlanc:2015,Lee:2021,Lonardoni:2018,Shi:2021}. By introducing a Gaussian auxiliary field, the quadratic operators can be expressed as integrals over one-body operators between the physical fields and the auxiliary ones. The linear form of the operators allows one to write the trial wave function $|\Psi_0\rangle$ as a product of
single-particle states. The sampling of auxiliary fields required to perform the integral produces the same end result as the propagator with quadratic spin-isospin operators acting on a trial wave function containing all the possible good spin-isospin states. Introducing the auxiliary field reduces the number of terms in the trial wave function from exponential to linear, with the additional computational cost of sampling of the integral over auxiliary fields.

However, the infamous sign problem --- wherein the weights of configurations in the MC simulation become negative or even complex, and therefore cannot be interpreted as classical probabilities --- prevents these methods from efficiently solving a wide class of problems. In such cases, the MC simulation requires an exponential number of samples to obtain a satisfactory signal-to-stochastic-noise ratio~\cite{Loh:1990,HATANO:1992,Troyer:2005,Kaul:2013}. 

In recent years, neural networks have become a powerful tool to represent complex correlations in multivariable functions and probability distributions and have found wide applications in academia and industry through the popularity of deep learning methods. Neural-network quantum states (NQS) have attracted significant interest as powerful wave-function ansatzes. Restricted Boltzmann machine (RBM), an energy-based, shallow generative neural network capable of representing arbitrary distribution functions~\cite{Roux:2008}, has been used for dimensionality reduction~\cite{Hinton:2006}, classification~\cite{Larochelle:2008}, feature learning~\cite{Coates:2011}, and, most importantly to the present study, to represent quantum many-body wave functions and propagators~\cite{Carleo:2017,Melko:2019,Rrapaj:2020txq}.
Explicitly, the RBM architecture represents the wave function in the computational basis in terms of $N$ visible-layer qubits $\vec{z}$ and $M$ hidden-layer qubits $\vec{h}$,
\begin{equation}
\begin{split}
    \langle \vec{z} | \Psi \rangle &=\Psi_\mathcal{R}(\vec{z})\\
    &=\sum_{\vec{h}}\exp\left[\sum_{i}a_iz_i+\sum_{i,j}z_iW_{ij}h_j+\sum_{j}b_jh_j\right],
    \label{eq:RBM_ansatz}
\end{split}
\end{equation}
as specified by a set of coupling parameters that we denote collectively by $\mathcal{R}\equiv\left(\vec{a},\overset{\text{\tiny$\bm\leftrightarrow$}}{W},\vec{b}\right)$. In general, the coupling coefficients are allowed to be complex numbers. The summation
\begin{equation}
    \sum_{\vec{h}}\equiv \sum_{h_1=\pm 1}\ldots\sum_{h_M=\pm 1}
\end{equation}
denotes the marginalization over all hidden qubits. An RBM encoding of a physical state is said to be efficient if the number of required hidden units scales polynomially with the size of the physical system.

A distinct feature of RBMs is that they can be analyzed by employing tensor network theory~\cite{Clark:2018,Glasser:2018}, which has been used to show that the entanglement entropy of the ansatz scales with a subregion’s volume~\cite{Chen:2018} rather than its area, a behavior typical of tensor network states. Indeed, RBMs can efficiently and exactly represent many highly entangled states~\cite{Gao:2017}, including chiral states (such as the Laughlin state~\cite{Glasser:2018}), arbitrary graph states~\cite{Raussendorf:2001}, topological toric code states~\cite{KITAEV:2003}, and states obeying the entanglement volume law~\cite{Verstraete:2006}.

On the other hand, there exists a class of states, which are either ground states of gapped Hamiltonians or can be generated by a polynomial-size quantum circuit, that the RBM architecture cannot efficiently represent \cite{Gao:2017}. If the wave-function ansatz of Eq. (\ref{eq:RBM_ansatz}) is expanded to include one more hidden layer --- resulting in a Deep Boltzmann machine (DBM) --- then such states can be efficiently represented \cite{Gao:2017}, provided that the coupling coefficients are complex-valued. For the transverse-field Ising model and the Heisenberg model, one can construct DBM networks with a polynomially-scaling number of qubits that can exactly represent the imaginary time evolution through Trotter expansion~\cite{Carleo:2018}. The resulting DBM parameters are complex, further complicating the use of classical MC sampling, as discussed above.

Due to the recent rapid developments in quantum technology, these wave-function ansatzes have been ported to quantum circuits, with the Variational Quantum Eigensolver (VQE) being one of the most prominent examples~\cite{Peruzzo:2014,McClean:2016,TILLY:20221}, paving the way for variational hybrid quantum-classical algorithms. Additionally, the non-unitary imaginary time evolution of an initial state can be approximated by unitary operators in a quantum circuit. There are many methods that fall into this category, and are commonly referred to as Quantum Imaginary Time Evolution (QITE)~\cite{McArdle:2019,Motta:2020,Turro:2021,Huang:2023}. Some methods optimize the variational ansatz to match the effect of the imaginary time evolution, while other methods employ a block-encoding scheme that performs the desired non-unitary operation in a probabilistic fashion. In the former case, the quality of the ansatz determines the quality of the approximation, and in the latter, post-selection is needed to remove instances when the desired operation has not been performed. 



In this work, we provide several important contributions to performing QITE for all possible spin Hamiltonians. This is achieved by means of auxiliary qubits on quantum hardware, in analogy with AFMC methods. In Sec. \ref{sec:RBM}, we represent the Trotter step of any imaginary-time propagator of Pauli strings through a unitary RBM. In contrast to the wave-function ansatz in Eq. (\ref{eq:RBM_ansatz}), the RBM propagator that we develop has full Pauli support, including $\sigma^x, \sigma^y$. 

In Sec. \ref{sec:DBM}, we construct a universal, unitary DBM ansatz capable of exactly representing any thermal state within the Trotter approximation scheme. The DBM wave function that we consider is diagonal in the computational basis. We provide the DBM parameters as functions of the couplings in the original Hamiltonian, removing the need for variational optimization. In other words, we devise the DBM representation of an arbitrary initial state and compute the resulting parameters from the action of the propagator in terms of Trotter steps. The imaginary time evolution represents the thermal ensemble at finite temperature, and for very long times it purifies out the ground state. Both Boltzmann machines can be efficiently implemented on quantum hardware, and we provide analysis of the respective circuit widths and gate counts. In Sec. \ref{sec:numerics}, we provide a numerical example of our method, simulating the resulting quantum circuit and computing observables by means of sampling. We conclude by summarizing our main results in Sec. \ref{sec:conclusion}. We expand upon and explain further details in a series of appendices at the end of this article. In this work, we refer to $z$ to denote the eigenvalue of $\sigma^z$, and $\sigma$ to generically represent Pauli operators.


\section{Auxiliary field decomposition of imaginary-time propagator}
\label{sec:RBM}


A priori, the Hamiltonian $H$ need not be expressed in terms of interactions between spin-$1/2$ qubits. For example, starting with a system of $N_F$ interacting fermions occupying $N_S$ available single-particle states, we can use one of several well-known encoding schemes (e.g., Jordan-Wigner \cite{1928ZPhy...47..631J}, Bravi-Kitaev \cite{2002AnPhy.298..210B}) to express $H$ in terms of a system of $N$ qubits. Depending on the nature of the fermionic system and the choice of encoding, the resulting representation of $H$ in the qubit basis may contain $k$-local interactions for $k\leq N$. Hereafter, we shall assume that $H$ is expressed in terms of a system of $N$ qubits. 

For such a system, the imaginary-time propagator can be represented in terms of an RBM with \textit{real} coefficients \cite{Rrapaj:2020txq}, providing an efficient classical algorithm for sampling the ground state (or ground-state expectation values of few-body operators). In the present work, our aim is modify this approach to be suitable for quantum computation by constructing a representation of the imaginary-time propagator in terms of auxiliary qubits with \textit{imaginary} coupling coefficients. The resulting RBM is composed of unitary operators, which are straightforward to implement on quantum hardware. 

Our algorithm relies upon several theorems:
\\

\begin{theorem}
For any $\tau \in \mathbb{R}$, the exponential of the diagonal $M$-qubit Pauli string $e^{- \tau \prod_{i=1}^M \sigma^z_i}$ can be expressed as a marginalization over two-qubit gates between the visible qubits $\sigma_i^z$ and a single auxiliary qubit, plus induced Pauli strings of length $k<M$.
\label{thm:pauli_strings}
\end{theorem}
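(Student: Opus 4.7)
The plan is to realize the non-unitary operator $e^{-\tau\prod_i\sigma_i^z}$ by combining (i) a marginalization of an entangling unitary between the $M$ visible qubits and a single auxiliary qubit $h$, with (ii) additional single- and few-qubit rotations that account for the induced Pauli strings of length $k<M$. Since $\bigl(\prod_i\sigma_i^z\bigr)^{2}=\mathbb{1}$, the target admits the closed form $e^{-\tau\prod_i\sigma_i^z}=\cosh(\tau)\,\mathbb{1}-\sinh(\tau)\prod_i\sigma_i^z$, so the proof reduces to showing that the marginalization delivers $-\sinh(\tau)\prod_i\sigma_i^z$ up to Pauli strings of strictly lower length, with the identity piece and any residuals absorbed elsewhere.

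I would then take the commuting product $V=\prod_{i=1}^{M}e^{i\theta\,\sigma_i^{z}\sigma_h^{x}}$, expand each factor as $\cos\theta\,\mathbb{1}+i\sin\theta\,\sigma_i^{z}\sigma_h^{x}$, and multiply out to obtain
\[
V=\sum_{S\subseteq[M]}(\cos\theta)^{M-|S|}(i\sin\theta)^{|S|}\Bigl(\prod_{i\in S}\sigma_i^{z}\Bigr)(\sigma_h^{x})^{|S|}.
\]
Because $(\sigma_h^{x})^{|S|}$ equals $\mathbb{1}_h$ or $\sigma_h^{x}$ according to the parity of $|S|$, preparing $h$ in $\ket{0}$ and projecting onto $\bra{q}_h$ with $q\equiv M\pmod{2}$ retains exactly the parity of $|S|$ that contains $|S|=M$. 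The marginalized operator on the visibles then reads
\[
\bra{q}_hV\ket{0}_h=\sum_{\substack{0\le|S|\le M\\ |S|\equiv q\pmod{2}}}(\cos\theta)^{M-|S|}(i\sin\theta)^{|S|}\prod_{i\in S}\sigma_i^{z},
\]
whose top-order component is $i^{M}\sin^{M}\theta\,\sigma_1^{z}\cdots\sigma_M^{z}$. I would fix $\theta$ so that this coefficient equals $-\sinh(\tau)$, up to an overall rescaling that is inherent to the block-encoding interpretation of the RBM, which secures the $M$-body piece of the target propagator.

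The remaining contributions with $0<|S|<M$ are Pauli strings of strictly shorter length whose coefficients are determined by the same $\theta$; these are precisely the induced Pauli strings of length $k<M$ asserted in the statement, and the theorem does not require cancelling them inside this single marginalization. The main obstacle is the parity case $M$ odd, in which the $|S|=0$ identity contribution is absent from the marginalization so $\cosh(\tau)\,\mathbb{1}$ cannot be read off directly; I would remedy this either by inserting a single-qubit rotation on $h$ that mixes the two projection outcomes before measurement (introducing an independent free parameter), or by regarding the identity as the lowest-order induced Pauli string and matching it with an explicit global rotation prefactor. With those options in place, the remainder of the argument is the direct book-keeping of the coefficients extracted from the expansion above.
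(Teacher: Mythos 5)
Your marginalization object is essentially the paper's: $\prod_i e^{i\theta\sigma_i^z\sigma_h^x}$ sandwiched between ancilla states is exactly the block-encoding form of Eq.~(\ref{eq:2body_id})/(\ref{eq:rbm_ansatz}), and your instinct to fix the odd-$M$ parity with an extra single-qubit rotation on $h$ is precisely the bias term $C$ the paper uses. The gap is in \emph{what you match}. You expand the marginalized operator additively in the Pauli-string basis and set its top coefficient $A\,i^M\sin^M\theta$ equal to $-\sinh\tau$, treating the lower-order additive residuals as the ``induced Pauli strings.'' But the theorem (as the paper proves and uses it) is a statement about the \emph{exponent}: there exist couplings such that
\begin{equation*}
\exp\Big[-\tau\prod_i\sigma_i^z-\sum_{k<M}\sum_{P}K_P^{(k)}\prod_{j\in P}\sigma_j^z\Big]=\sum_{h=\pm1}e^{-ih\left(C+\sum_r W_r\sigma_r^z\right)},
\end{equation*}
i.e.\ the induced strings are extra terms in the Hamiltonian, not additive corrections to the operator. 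This distinction is not cosmetic: the subsequent recursion transfers the induced couplings to the other side as $e^{+K_P^{(k)}\prod\sigma^z}$ factors and peels them off with further auxiliary qubits \emph{multiplicatively}; bare additive Pauli-string residuals cannot be removed this way, so your version of the statement does not feed the corollary, the explicit two- and three-body identities, or the Trotter construction. Moreover the two matchings give different parameters: the correct condition is obtained by taking the logarithm of the (strictly positive, for $\theta<\pi/(2M)$) diagonal marginalized operator and equating the $M$-body coefficient of the \emph{log} to $-\tau$, which yields $K^{(M)}$ as the alternating binomial sum of $\log\cos[(2k-M)\theta]$ in Eq.~(\ref{eq:KM_solve}) — not $\sin^M\theta\propto\sinh\tau$.

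The repair is short given what you already have: your expansion shows the marginalized operator is the diagonal function $\cos(\theta\sum_iz_i)$ (or $\sin$, for odd $M$ with the ancilla rotation/bias), so take its logarithm eigenvalue-by-eigenvalue, project onto the Pauli-string basis (a nonsingular linear system, so all $K_P^{(k)}$ exist), and then argue by continuity that the resulting $K^{(M)}(\theta)$ sweeps $(-\infty,0]$ as $\theta$ runs up to the first zero of the highest-frequency cosine, with a sign flip of one coupling covering $[0,\infty)$. Your additive identity $e^{-\tau P}=\cosh\tau\,\mathbb{1}-\sinh\tau\,P$ is not needed and, as used, points the matching at the wrong quantity.
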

\begin{proof}
Let $C(M,k)$ denote the set of $k$-element combinations drawn from the set of indices $I=\{1,...,M\}$. The most general Hamiltonian formed from Pauli $\sigma^z$ strings of length $k\leq M$ can be expressed as
\begin{equation}
\tau H=\sum_{k=0}^M\sum_{P\in C(M,k)}K_P^{(k)}\prod_{j \in P}\sigma^z_j,
\label{eq:h_lhs}
\end{equation}
where $K_P^{(k)}$ are the $2^M$ possible coupling parameters, and we include the $k=0$ term corresponding to an overall energy shift. The interaction is diagonal in the computational basis as we only consider $\sigma^z_i$. 
 
 
 Our ansatz for the marginalized form is
\begin{equation}
e^{-\tau H}=\sum_{h=\pm 1}e^{-ih\left(C+\sum_{r = 1}^MW_r\sigma^z_r\right)},
\label{eq:rbm_ansatz}
\end{equation}
where the coefficients $W_r$ control the coupling between the hidden qubit $h$ and the visible qubits, and the constant $C$ represents a bias term for $h$. One expects an overall norm to be present on the right hand side, but this factor can be absorbed into the $k=0$ term in $H$, so we do not explicitly show it here.

Taking the logarithm of both sides, in the computational basis, yields the matching condition
\begin{widetext}
\begin{equation}
-\sum_{k=0}^M\sum_{P\in C(M,k)}K_P^{(k)}\prod_{j \in P}z_j=\ln\left\{\sum_{h=\pm 1}\exp\left[-ih\left(C+\sum_{r=1}^MW_rz_r\right)\right]\right\},
\label{eq:matching}
\end{equation}
where we have replaced the operator $\sigma_r^z$ by its eigenvalue $z_r$.
\end{widetext}
 It is straightforward to interpret this constraint as a system of linear equations, expressed in matrix form as
\begin{equation}
\mathcal{M}\vec{K}=\vec{L}.
\end{equation}
The matrix $\mathcal{M}$ corresponds to the basis of Pauli $\sigma^z$ strings of length $k\leq M$ and is therefore guaranteed to be nonsingular, and $\vec{L}$ is the right-hand side of Eq. \eqref{eq:matching}. Thus, for fixed values of the parameters $C$, $W_r$ it is always possible to find couplings $K_P^{(k)}$ that satisfy Eq. (\ref{eq:matching}). Our aim, however, is to find auxiliary coupling parameters $C$, $W_r$ that reproduce a desired value for the highest-order coupling $K^{(M)}$. (We drop the subscript $P$ as there is a unique highest-order coupling). Upon inverting Eq. (\ref{eq:matching}), we find that $K^{(M)}$ is determined by the logarithm of a ratio of products of cosine functions with different frequencies. Within the domain of the logarithm, by tuning the parameters $C$ and $W_r$, one may fix the value of the highest order coupling $K^{(M)}$ to any desired real value. A detailed derivation can be found in Appendix~\ref{app:rbm_coupling}. 
\end{proof}

The resulting lower-order couplings $K_P^{(k)}$ for $k <M$ can first be transferred to the right-hand side of Eq. (\ref{eq:rbm_ansatz}) (with a negative sign) and then recursively expressed in terms of additional auxiliary fields. Thus, it will require $\lesssim 2^M$ auxiliary fields to express a generic $M$-local operator in the RBM architecture, assuming all coefficients are non-zero. This recursive process is represented schematically in Fig. \ref{fig:rbm_identities} for the cases $M=1$, $2$, and $3$. As discussed in detail below, when $M\leq 2$ it is possible to choose the RBM couplings so that no lower-order terms are induced. 

In this work, we frequently rotate between the $x$, $y$, and $z$ bases using the operators
\begin{equation}
           H^x=\frac{1}{\sqrt{2}}\left(\begin{array}{cc}
            1 & 1 \\
             1 & -1
           \end{array}\right),~~~H^y=\frac{1}{\sqrt{2}}\left(\begin{array}{cc}
            - i & i \\
              1 & 1
           \end{array}\right).
       \end{equation}
$H^x$ is the usual Hadamard operator, and $H^y$ is an analogous single-qubit operator defined to satisfy the crucial identities
\begin{equation}
    \sigma_i^x=H^x_i  \sigma_i^z H^x_i,~~~~\sigma_i^y=H^y_i \sigma_i^z {H^y_i}^{\dag},
    \label{eq:hadamard_rotations}
\end{equation}
where ${H^x_i}^{\dagger}=H^x_i$.
With the above expressions, it is now straightforward to generalize Theorem \ref{thm:pauli_strings}:
\begin{corollary}
For any $\tau\in \mathbb{R}$, the exponential of an arbitrary $M$-qubit Pauli string $e^{-\tau \prod_{i=1}^M \sigma_i}$ can be expressed as a marginalization over two-body gates between the visible qubits and a single auxiliary qubit, plus induced Pauli strings of length $k<M$.
\label{cor1}
\end{corollary}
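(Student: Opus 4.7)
The plan is to reduce Corollary~\ref{cor1} to Theorem~\ref{thm:pauli_strings} by performing a site-by-site basis rotation on the visible qubits that brings the mixed Pauli string into an all-$\sigma^z$ string, applying Theorem~\ref{thm:pauli_strings}, and then rotating back. Concretely, for each site $i=1,\ldots,M$, I define a single-qubit unitary $U_i$ according to which Pauli operator appears in the string: $U_i=H^x_i$ if $\sigma_i=\sigma^x_i$, $U_i=H^y_i$ if $\sigma_i=\sigma^y_i$, and $U_i=I$ if $\sigma_i=\sigma^z_i$. By the identities in Eq.~(\ref{eq:hadamard_rotations}), I have $\sigma_i = U_i \sigma^z_i U_i^{\dagger}$ in every case. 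Since the $U_i$ act on disjoint qubits they mutually commute, and so setting $U=\prod_i U_i$ gives $\prod_{i=1}^M \sigma_i = U \bigl(\prod_{i=1}^M \sigma^z_i\bigr) U^{\dagger}$.

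Exponentiating, I obtain
\begin{equation}
e^{-\tau \prod_{i=1}^M \sigma_i} \;=\; U\,e^{-\tau \prod_{i=1}^M \sigma^z_i}\,U^{\dagger}.
\end{equation}
Theorem~\ref{thm:pauli_strings} rewrites the inner factor as a marginalization $\sum_{h=\pm 1} \exp[-ih(C+\sum_r W_r \sigma^z_r)]$ plus induced Pauli-$z$ strings of length $k<M$. Conjugating this expression by $U$, I use that $U$ acts trivially on the auxiliary qubit $h$ and that conjugation commutes with the sum and the exponential; the only effect is to replace each $\sigma^z_r$ inside the exponent by $U_r \sigma^z_r U_r^{\dagger}=\sigma_r$, the original Pauli operator at site $r$. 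The result is
\begin{equation}
e^{-\tau \prod_{i=1}^M \sigma_i} \;=\; \sum_{h=\pm 1} \exp\!\bigl[-ih\bigl(C+\sum_{r=1}^M W_r \sigma_r\bigr)\bigr] \,+\, (\text{induced lower-order terms}),
\end{equation}
which is still a marginalization over a single auxiliary qubit coupled to the visible qubits through two-body gates, since each $e^{-ih W_r \sigma_r}$ acts only on visible qubit $r$ and the auxiliary. The induced lower-order corrections are Pauli strings of length $k<M$ on subsets of the original sites, with the Pauli type on each site inherited from the original string; they can be block-encoded recursively using the corollary itself, in direct analogy to the recursion sketched in Fig.~\ref{fig:rbm_identities} for the diagonal case.

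The main subtlety I anticipate is bookkeeping for the $H^y$ rotation, which unlike $H^x$ is not Hermitian, so the dagger in $\sigma^y_i=H^y_i \sigma^z_i (H^y_i)^{\dagger}$ must be tracked carefully when forming $U^{\dagger}$ and when conjugating the lower-order terms; however, because each $U_i$ acts on a distinct qubit and commutes with all operators on other sites, the rotations on different sites do not interfere, and the argument reduces to the single-site identities already established. With that checked, the values of the coupling parameters $C$ and $W_r$ and the structure of the recursion for lower-order terms are inherited unchanged from Theorem~\ref{thm:pauli_strings}, completing the proof.
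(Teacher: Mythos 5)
Your proposal is correct and follows essentially the same route as the paper: rotate each qubit independently via $H^x_i$ or $H^y_i$ to turn the generic Pauli string into a $\sigma^z$ string, apply Theorem~\ref{thm:pauli_strings}, and conjugate back, with the induced lower-order strings inheriting the original Pauli types. Your extra care with the non-Hermitian $H^y$ conjugation is a sound elaboration of the same argument rather than a different method.
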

\begin{proof}
We can independently rotate each qubit using Eq.~\eqref{eq:hadamard_rotations} so that the respective Pauli string is $\sigma^z$. Then, we apply  the result of Theorem \ref{thm:pauli_strings} [i.e., Eq.~\eqref{eq:rbm_ansatz}], and finally rotate back to the original Pauli string. This produces the same identity as in Eq.~(\ref{eq:rbm_ansatz}), but the Pauli strings on both sides of the equation are now generic.
\end{proof}

In general, there are infinitely many choices of $C$ and $W_r$ that yield the same value of $K^{(M)}$. It is possible to choose the hidden-layer couplings to simplify or even eliminate some of the induced couplings. For example, in the case of a two-local interaction it is possible to marginalize over an auxiliary qubit $h$ without inducing any one-body interactions. Specifically, we can write
\begin{equation}
\begin{split}
e^{-K^{(2)} \sigma_1 \sigma_2} =& A \sum_{h=\pm1} e^{-iW(\sigma_1+s \sigma_2)h},\\
=& 2 A\ \left( \mathbb{1} \otimes   \langle \pm 1|\right) e^{-iW(\sigma_1+s \sigma_2)\sigma^x_h} \left(\mathbb{1} \otimes | \pm 1\rangle\right),
\end{split}
\label{eq:2body_id}
\end{equation}
where the normalization $A = e^{K^{(0)}}$, and $\mathbb{1}$ is the identity operator in the physical Hilbert space. The normalization $A$ and the coupling constants $W$, $s$ are given by
\begin{equation}
\begin{split}
A&=\exp(| K^{(2)}|)/2,\\
 W&=\frac{1}{2} \cos^{-1}\left[\exp(-2|K^{(2)}|)\right],\\ 
 s &= \mathrm{sign}(K^{(2)}).
 \end{split}
 \label{eq:2bd}
\end{equation}
In accordance with Corollary \ref{cor1}, the above identity holds for arbitrary Pauli operators $\sigma_i=\{\mathbb{1}_i,\sigma^x_i,\sigma^y_i,\sigma^z_i\}$. When one of the Pauli strings is the single-qubit identity $\mathbb{1}_i$, the expression reduces to an RBM auxiliary-field identity for non-unitary one-qubit operators 
\begin{equation}
    e^{-K^{(1)}\sigma_1}=A\sum_{h=\pm 1}e^{-iW(\sigma_1 + s\mathbb{1})h},
    \label{eq:1bd}
\end{equation}
where $A$, $W$, and $s$ are given by Eq. \eqref{eq:2bd} with the one-body coupling $K^{(1)}$ in place of $K^{(2)}$.

\begin{figure*}
    \centering
    \includegraphics[scale=0.5]{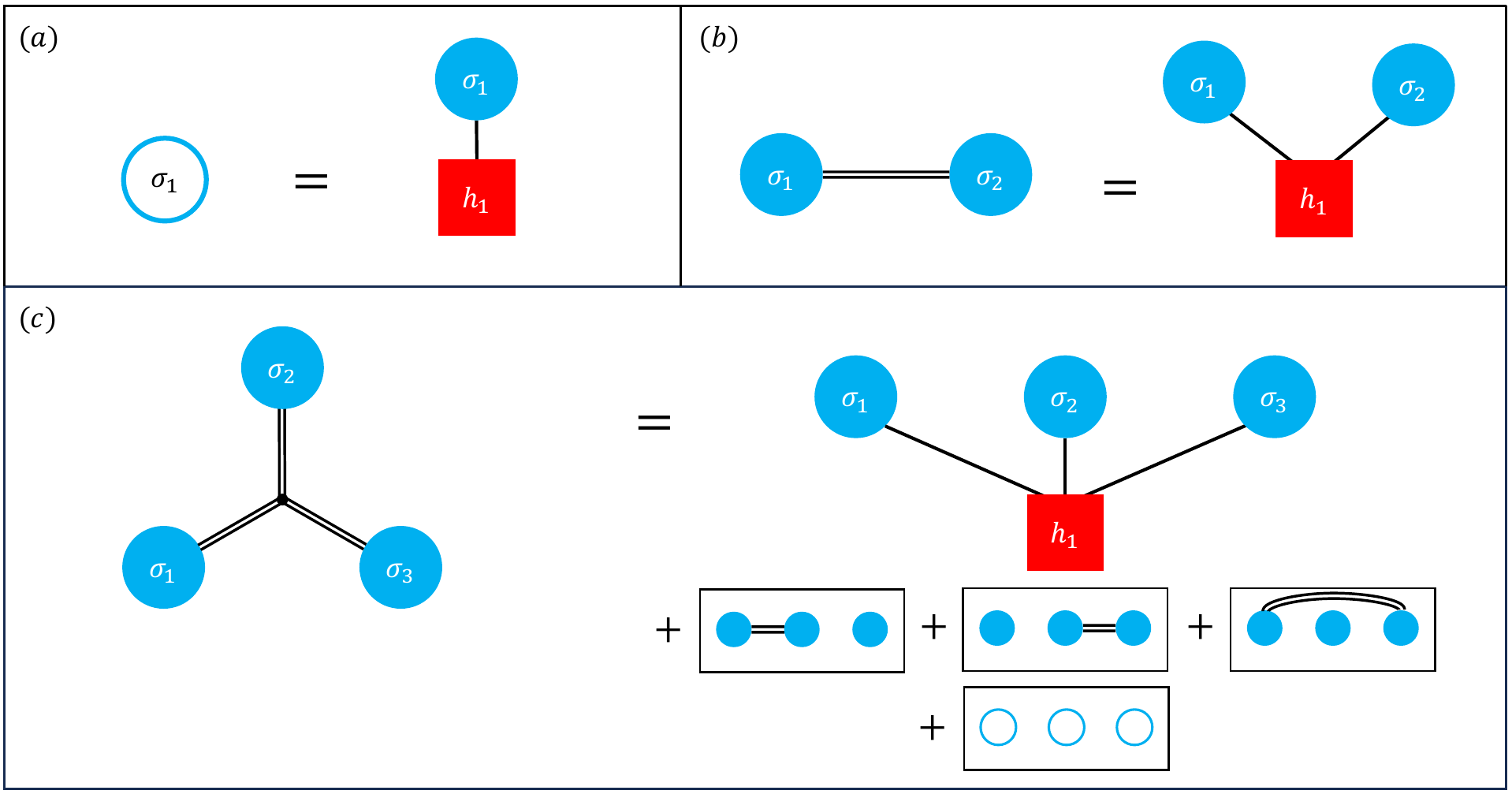}
    \caption{Graphical representations of the RBM identities for (a) one-body, (b) two-body, and (c) three-body interactions. Open circles denote non-unitary one-body interactions. Double lines represent non-unitary interactions between two or more qubits. Single lines denote unitary interactions between visible and hidden qubits. In sub-figure (c), the inset boxes show the induced one- and two-body couplings between the three visible qubits. Within each inset, the ordering of the RBM indices is the same as above, although the labels have been omitted for simplicity.}
    \label{fig:rbm_identities}
\end{figure*}

The right-hand side of Eq. (\ref{eq:2body_id}) can be performed by explicit summation over the auxiliary qubit $h$ (see the first line), resulting in a linear combination of unitaries (LCU) \cite{Childs2012HamiltonianSU}. As the second line shows, the unitary operator can also be expressed as an exact block encoding~\cite{Gilyen:2019} of the imaginary-time propagator. This provides an alternative route by initializing the ancilla qubit in $\ket{\pm1}$, and post-selecting on $\ket{\pm1}$ (see also~\cite{hsieh2021unitary}). Both the LCU and block-encoding methods are non-deterministic, and in case of failure, one must repeat the entire circuit. 

The probability of successfully encoding the auxiliary qubit depends on the initial state of the visible qubits
\begin{equation}
P_s=1-(1-e^{- 4 |K^{(2)}|})\alpha,
\label{eq:prob_success}
\end{equation}
where $\alpha\equiv P(\sigma_1=s \sigma_2)$ is the probability of measuring $\sigma_1=s\sigma_2$ in the initial state. For small values of $|K^{(2)}|$, the probability of success is close to 1, and for very large values $P_s \approx 1-\alpha$. Eq. (\ref{eq:prob_success}) holds for the one-body case with $\alpha = P(\sigma_1=s)$.
By marginalizing over all possible physical states, it is easy to see that the average value is 
\begin{equation}
\langle P_s \rangle = \frac{1}{2}(1+e^{-4 |K^{(2)}|}). 
\label{eq:avg_Ps_2bd}
\end{equation}

The two-body interaction implemented by Eq. (\ref{eq:2body_id}) is a single component of a larger total Hamiltonian. If we restrict our focus to the two-qubit subsystem with ``local'' Hamiltonian $H=K^{(2)}\sigma_1\sigma_2$, then the two (degenerate) ground states correspond to the anti-aligned case $\sigma_1 = -s \sigma_2$, and the two (degenerate) excited states correspond to the aligned case $\sigma_1=s\sigma_2$. 

If the system is initialized in the anti-aligned state, then $\alpha=0$ and Eq. (\ref{eq:prob_success}) implies that the auxiliary encoding cannot fail. On the other hand, if the system is initially in the aligned state, then $\alpha = 1$ and the probability of success is $e^{-4|K^{(2)}|}$. This represents the worst-case scenario, as $P_s\rightarrow 0$ for large values of $K^{(2)}$. We can view our procedure as a type of acceptance-rejection algorithm: if the two-qubit subsystem is in the ground state of the ``local'' Hamiltonian, then the result is accepted with 100\% probability. If the subsystem is in an excited state, then the sample is accepted with a probability that decreases exponentially with the size of the local energy gap. 

With every application of the ancilla method, the overall probability of success, being the product of the individual probabilities, will decrease exponentially, which is a typical feature of block-encoding methods. When Eq. (\ref{eq:2body_id}) is employed in a Trotter approximation scheme, choosing a small time step to control the Trotter error naturally leads to a small value of $|K^{(2)}|$, which in turn increases the probability of successfully implementing the auxiliary qubit identity.


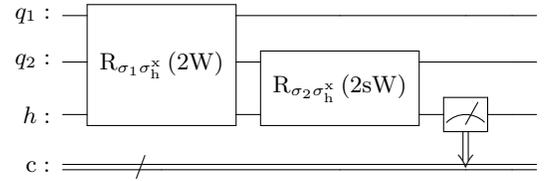
\begin{figure}
    \centering
\begin{equation*}
\Qcircuit @C=1.0em @R=1.0em { \\
	 	\nghost{{q}_{1} :  } & \lstick{{q}_1 :  } & \multigate{2}{\mathrm{R_{\sigma_1 \sigma^x_h}}\,(\mathrm{2 W})} & \qw & \qw & \qw & \qw\\
	 	\nghost{{q}_{2} :  } & \lstick{{q}_{2} :  } & \ghost{\mathrm{R_{\sigma_1 \sigma^x_h}}\,(\mathrm{2 W})} & \multigate{1}{\mathrm{R_{\sigma_2 \sigma^x_h}}\,(\mathrm{2 s W})} & \qw & \qw & \qw\\
	 	\nghost{h :  } & \lstick{h :  } & \ghost{\mathrm{R_{\sigma_2 \sigma^x_h}}\,(\mathrm{2 W})} & \ghost{\mathrm{R_{\sigma_2 \sigma^x_h}}\,(\mathrm{2s W})} & \meter & \qw & \qw\\
	 	\nghost{\mathrm{{c} :  }} & \lstick{\mathrm{{c} :  }} & \lstick{/_{_{}}} \cw & \cw & \dstick{_{_{\hspace{0.0em}}}} \cw \ar @{<=} [-1,0] & \cw & \cw\\
\\ }
\end{equation*}
\caption{Quantum circuit diagram that implements the two-body identity in Eq. \eqref{eq:2body_id} using block encoding.}
\label{fig:circ_2bd}
\end{figure}
The circuit in Fig. \ref{fig:circ_2bd} implements Eq. (\ref{eq:2body_id}) using the block-encoding method. The two-qubit rotation gates are followed by a measurement of the auxiliary qubit $h$ for post-selection, and the value is stored in the classical bit $c$.



In the case of a three-body interaction, the marginalization ansatz is
\begin{equation}
\begin{split}
\exp\bigg[-&\Big(K^{(3)} \sigma_1 \sigma_2 \sigma_3 + K^{(2)}_{12} \sigma_1 \sigma_2 + K^{(2)}_{13} \sigma_1 \sigma_3 \\
 +& K^{(2)}_{23} \sigma_2 \sigma_3
+ K^{(1)}_1 \sigma_1 + K^{(1)}_2 \sigma_2 + K^{(1)}_3 \sigma_3 \Big) \bigg]\\
=& A \sum_{h=\pm1} e^{-i( W_1 \sigma_1+ W_2 \sigma_2 + W_3 \sigma_3)h - i W_0 h}\\
= 2 A ( \mathbb{1} \otimes   &\langle \pm 1| ) e^{-i( W_1 \sigma_1+ W_2 \sigma_2 + W_3 \sigma_3)\sigma^x_h - i W_0 \sigma^x_h} (\mathbb{1} \otimes | \pm 1\rangle ).
\label{eq:3bdy_rbm}
\end{split}
\end{equation}
One possible solution is
\begin{equation}
\begin{split}
W\equiv W_1=W_2=W_3&=\frac{1}{2}\tan^{-1}\left[\left(1-e^{-8| K^{(3)}|}\right)^{1/4}\right],\\
W_0&=s W,\\
A&=\frac{1}{2}\left[\sec^4(2W)\sec(4W)\right]^{1/8},
\end{split}
\end{equation}
where $s=\mathrm{sign}(K^{(3)})$. We find that it is necessary in this case to include a bias term $W_0$ for the hidden unit. These choices reproduce the three-local term, but they also induce one- and two-local contributions:
\begin{equation}
    \begin{split}
    K^{(1)}_{1}=K^{(1)}_{2}=K^{(1)}_{3}&=-\frac{s}{8}\log[\cos(4W)],\\
        K^{(2)}_{12}=K^{(2)}_{13}=K^{(2)}_{23}&=-\frac{1}{8}\log[\cos(4W)].
    \end{split}
\end{equation}
In practice, the induced one- and two-local terms could be absorbed into the corresponding intrinsic couplings and then expressed through auxiliary fields using the one- and two-local identities of Eqs. \eqref{eq:1bd} and \eqref{eq:2body_id}, respectively. This iterative procedure, starting from the highest-order interaction, absorbing the induced couplings into the physical couplings at lower orders, and then repeating marginalization, produces $e^{-\tau H}$ entirely expressed in terms of a sum of unitary operators. 

Like the two-body case, the encoding of the non-unitary three-body operator is non-deterministic, and the probability of success depends on the initial state of the visible qubits. In particular,
\begin{equation}
\begin{split}
    P_s&=1-\sin^2(2W)\alpha_2-\sin^2(4W)\alpha_4\\
    &=1-\left[1-f_2\left(|K^{(3)}|\right)\right]\alpha_2+\left[1-f_4\left(|K^{(3)}|\right)\right]\alpha_4,
    \end{split}
\end{equation}
where
\begin{equation}
\begin{split}
    f_2(x)&\equiv \frac{1}{1+\sqrt{1-e^{-8x}}},\\
    f_4(x)&\equiv \frac{e^{-16x}}{\left(1+\sqrt{1-e^{-8x}}\right)^4},
    \end{split}
\end{equation}
and $\alpha_n\equiv P(|\sigma_1+\sigma_2+\sigma_3+s|=n)$. As before, if the system is initially in one of the degenerate ground states of the ``local'' Hamiltonian 
\begin{equation}
H=K^{(3)}\sigma_1\sigma_2\sigma_3+K_{12}^{(2)}\sigma_1\sigma_2+\ldots,
\end{equation} 
then $\alpha_2=\alpha_4=0$, and the block encoding succeeds with 100\% probability. If the system is prepared in one of the degenerate first excited states, then $\alpha_2=1$, $\alpha_4=0$, and the result is accepted with a probability that exponentially approaches $1/2$ as the local energy gap increases. Finally, in the worst-case scenario, the system is initially in the unique second excited state; then $\alpha_2=0$, $\alpha_4=1$, and the result is accepted with a probability that exponentially approaches zero as $K^{(3)}\rightarrow \infty$. 

Averaging over all possible physical states, the typical success probability is 
\begin{equation}
\begin{split}
    \braket{P_s}&=\left[3+4\cos^2(2W)+\cos^2(4W)\right]/8\\
    &=\left[3+4f_2\left(|K^{(3)}|\right)+f_4\left(|K^{(3)}|\right)\right]/8,
    \end{split}
\end{equation}
which approaches $5/8$ for large values of $K^{(3)}$. 

The RBM identity for a non-unitary four-body operator is presented in Appendix \ref{app:4bdy_id}. We briefly summarize this result by noting that it is possible to choose the RBM couplings so that only two-body couplings are induced, without introducing any one- or three-body terms. Even so, it can quickly become very expensive to implement higher-body interaction terms, especially if the induced couplings cannot be absorbed into existing intrinsic interactions.

If new auxiliary qubits are used for every time step, the width of the circuit increases linearly with the number of Trotter steps and with system size (number of interactions), but the number of auxiliary qubits can be exponential in the order of the interaction,
resulting in a significant cost in terms of the number of qubits required.

An alternative approach is to express the non-unitary many-body interaction in terms of a product of $H^x$, $H^y$, and $CX$ gates, plus a single one-body non-unitary operator for which we may then implement the RBM identity in Eq. \eqref{eq:1bd}. This method does not induce lower-order operators, at the cost of additional $CX$'s and increased circuit depth. 

Then, just one auxiliary qubit suffices, by measuring and resetting the qubit after each propagator application (two-qubit rotation). This implementation requires sequential application of all two-qubit gates, resulting in a deep circuit, the depth of which scales linearly with system size and number of Trotter steps. An intermediate scenario would utilize all available ancillary qubits to parallelize as many two-qubit gates as possible, resetting the qubits before the next sequence of propagators. For a local Hamiltonian, the number of auxiliary qubits would be linearly proportional to system size, and the depth of the circuit would scale linearly with the number of Trotter steps.

Such a route is feasible due to the following theorem:
\begin{theorem}
For any $\tau \in \mathbb{R}$, the exponential of the $M$-qubit Pauli string $e^{- \tau \prod_{i=1}^M \sigma_i}$ can be expressed as $U e^{-\tau \sigma^z_M} U^{\dagger}$, where the unitary matrix $U$ is a product of $H^x$, $H^y$, and $CX$ gates.
\label{thm:pauli_strings_$CX$}
\end{theorem}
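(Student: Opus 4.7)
The plan is to diagonalize the Pauli string into a single $\sigma^z$ on qubit $M$ by conjugation with a sequence of single-qubit basis rotations followed by a staircase of controlled-NOT gates, and then exploit the fact that unitary conjugation commutes with exponentiation.

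First, I would handle the basis rotations. For each qubit $i$, the operator $\sigma_i$ is one of $\{\mathbb{1}_i,\sigma^x_i,\sigma^y_i,\sigma^z_i\}$. Using Eq.~\eqref{eq:hadamard_rotations}, I define a single-qubit unitary $V_i$ equal to $H^x_i$ if $\sigma_i=\sigma^x_i$, to $H^y_i$ if $\sigma_i=\sigma^y_i$, and to the identity otherwise, so that $V_i^{\dagger}\sigma_i V_i=\sigma^z_i$ on any qubit carrying a non-identity factor. Let $V=\prod_i V_i$; since the $V_i$ act on distinct qubits, $V^{\dagger}\left(\prod_{i=1}^M\sigma_i\right)V=\prod_{i=1}^M\sigma^z_i$.

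Second, I would collapse the product of $\sigma^z$'s onto a single qubit using the elementary identity
\begin{equation}
CX_{c,t}\,\sigma^z_t\,CX_{c,t}=\sigma^z_c\sigma^z_t,
\end{equation}
which follows from the standard conjugation rules for $CX$. Applying this iteratively via the CNOT staircase $W=CX_{1,2}CX_{2,3}\cdots CX_{M-1,M}$ gives $W\,\sigma^z_M\,W^{\dagger}=\prod_{i=1}^M\sigma^z_i$. Combining both steps and setting $U=VW$ yields
\begin{equation}
U\,\sigma^z_M\,U^{\dagger}=\prod_{i=1}^M\sigma_i.
\end{equation}
Since $U$ is unitary, conjugation commutes with the exponential, giving $e^{-\tau\prod_i\sigma_i}=U\,e^{-\tau\sigma^z_M}\,U^{\dagger}$, as claimed. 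By construction $U$ is a product only of $H^x$, $H^y$, and $CX$ gates.

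There is no real obstacle here — the argument is a direct constructive verification using two well-known operator identities. The only bookkeeping to be careful about is the case where some $\sigma_i=\mathbb{1}$ (so that qubit should be dropped from the CNOT staircase, and $M$ refers to the last non-identity factor) and the ordering of the CNOTs, which must match the chosen target. Both issues are handled by simply relabeling qubits so that the last active one is qubit $M$.
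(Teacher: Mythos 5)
Your proof is correct and follows essentially the same route as the paper: rotate each non-identity Pauli factor to $\sigma^z$ with $H^x$ or $H^y$, collapse the resulting $\prod_i\sigma^z_i$ onto a single qubit via the CNOT staircase, and use the fact that unitary conjugation commutes with the exponential. You simply make explicit the conjugation identity $CX_{c,t}\,\sigma^z_t\,CX_{c,t}=\sigma^z_c\sigma^z_t$ that the paper invokes only informally.
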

\begin{proof}
The derivation for the unitary operator $e^{- i \tau \prod_{i=1}^M \sigma_i}$ with $ \tau \in \mathbb{R}$ can be found in Ref. \cite{Whitfield:2011}. Here we provide a quick summary and develop the non-unitary equivalent: One can independently rotate each qubit with $H^x$ or $H^y$ gates so that the Pauli string is composed entirely of $\sigma^z$ operators. Intuitively, since the resulting Pauli string $\prod_i\sigma_i^z$ has two eigenvalues $\pm 1$, it can be replaced by a single Pauli operator, nested within a series of $CX$ gates. In particular
\begin{equation}
    e^{- \tau \prod_{i=1}^M \sigma^z_i}=\prod_{i=1}^{M-1}CX_{(i,i+1)}e^{-\tau \sigma^z_M}\prod_{i=M-1}^1CX_{(i,i+1)},
\end{equation}
where $CX_{(i,j)}$ is a control X gate with control on qubit $i$ and target qubit $j$. 
\end{proof}
As an explicit example, the three-body $XXZ$ operator can be expressed as 
\begin{equation}
\begin{split}
    e^{-K^{(3)}\sigma_1^x\sigma_2^x\sigma_3^z}=A~U \left[\sum_h e^{-iW(\sigma_3^z+s\mathbb{1})h}\right]U^{\dag},
\end{split}
\end{equation}
where 
\begin{equation}
\begin{split}
 U&=H^x_1~H^x_2~CX_{12}~CX_{23},\\
 A&=\exp(|K^{(3)}|)/2,\\
 W&=\frac{1}{2}\cos^{-1}\left[\exp(-2|K^{(3)}|)\right],\\
 s&=\mathrm{sign}(K^{(3)}).
 \end{split}
\end{equation}
Thus, an $N$-body non-diagonal, non-unitary operator can be expressed in terms of unitary operators using only a single auxiliary qubit.
\section{Boltzmann machine wave function ansatz}
\label{sec:DBM}



So far we have demonstrated that the imaginary-time propagator $e^{-\tau H}$ can be represented by an RBM with imaginary coefficients, ideal for quantum gate implementations. The resulting operator would then be acted on an initial wave function $\Psi_0$ to produce the thermal state of $H$, as in Eq. (\ref{eq:im_time_prop}). Here, we introduce the unitary DBM ansatz 
consisting of $N$ visible-layer qubits, $M$ hidden-layer qubits, and $M'$ deep-layer qubits 
\begin{equation}
    \begin{split}
    \langle \vec{z} | \Psi \rangle=& \Psi_{\mathcal{D}}(\vec{z})\\
    =&\sum_{\vec{h},\vec{d}}\exp \bigg[ i \bigg(\sum_i a_i z_i + \sum_{i,j}z_i W_{ij}h_j \\
    &~~~~~~~+ \sum_{i,j}h_i W'_{ij}d_j + \sum_{i}b_i h_i + \sum_{i}b'_i d_i \bigg) \bigg].
    \end{split}
    \label{eq:DBM_ansatz}
\end{equation}
The shorthand $\mathcal{D}=(\vec{a},\vec{b},\vec{b}',\overset{\text{\tiny$\bm\leftrightarrow$}}{W},\overset{\text{\tiny$\bm\leftrightarrow$}}{W'})$ denotes the complete set of DBM couplings, which are limited to visible/hidden and hidden/deep interactions as well as one-body terms for each layer. Crucially, there are no lateral connections. An overall factor of the imaginary unit $i$ has been set aside in the exponentials, so that when all coupling parameters are real the DBM is a sum of manifestly unitary operators.

\begin{figure*}
    \centering
    \includegraphics[scale=0.55]{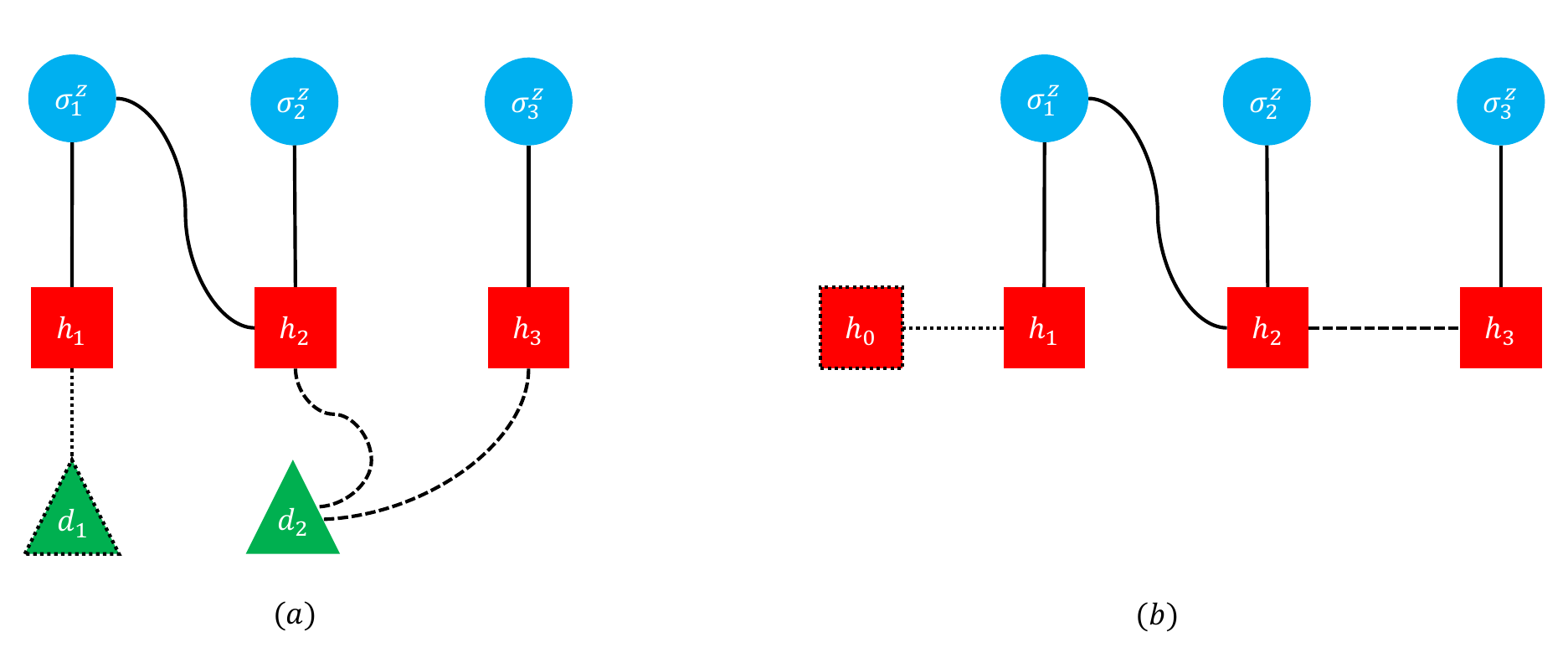}
    \caption{Two representations of the same physical system of 3 visible qubits $\sigma_i^z$ expressed in (a) DBM ansatz of Eq. (\ref{eq:DBM_ansatz}) and (b) L-DBM ansatz of Eq. (\ref{eq:DBM_lat_ansatz}). Lines between visible (hidden) and hidden (deep) units represent interlayer couplings $W_{ij}$ ($W'_{ij}$). Lines between hidden units correspond to lateral couplings $L_{ij}$. Dashed lines denote encodings that are equivalent after marginalization over hidden unit $d_2$. Dotted lines indicate that deep unit $d_1$ in the DBM is relabeled as laterally coupled hidden unit $h_0$ in the L-DBM architecture. Additional one-body bias terms ($a_i,b_i,b_i'$) are not represented explicitly.}
    \label{fig:dbm_ldbm_diagrams}
\end{figure*}

In this work, we introduce a unitary, semi-restricted Boltzmann machine
\begin{equation}
    \begin{split}
    \Psi_{\mathcal{L}}(\vec{z})=&\sum_{\vec{h}}\exp \bigg[ i \bigg(\sum_i a_i z_i + \sum_{i,j}z_i W_{ij}h_j \\
    &~~~~~~~~~+ \sum_{i<j}h_i L_{ij}h_j + \sum_{i}b_i h_i \bigg) \bigg],\\
    \end{split}
    \label{eq:DBM_lat_ansatz}
\end{equation}
consisting of $N$ visible qubits and $M$ hidden qubits with lateral couplings $L_{ij}$ between hidden-layer qubits. Note that there are still no direct interactions between visible qubits. The shorthand $\mathcal{L}=(\vec{a},\vec{b},\overset{\text{\tiny$\bm\leftrightarrow$}}{W},\overset{\text{\tiny$\bm\leftrightarrow$}}{L})$ represents the complete set of couplings. We refer to the wave function ansatz of Eq. (\ref{eq:DBM_lat_ansatz}) as a lateral-DBM (L-DBM). Although no auxiliary qubits are explicitly labeled as ``deep'', we will later demonstrate that the L-DBM is equivalent to a particular class of DBMs after performing marginalization. Graphical depictions of equivalent DBM and L-DBM networks are shown in Fig. \ref{fig:dbm_ldbm_diagrams} (a) and (b), respectively. 

\begin{theorem}
The L-DBM wave function ansatz of Eq. (\ref{eq:DBM_lat_ansatz}) is universal in the sense that it can represent any wave function resulting from imaginary time propagation of an arbitrary initial L-DBM state $\Psi_{\mathcal{L}_0}(\vec{z})$.
\label{thm:dbm_universal}
\end{theorem}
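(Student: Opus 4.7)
The plan is to prove the theorem by induction on the Lie-Trotter-Suzuki decomposition of $e^{-\tau H}$. Writing the propagator as a product of single-Pauli-string exponentials $\prod_k e^{-\Delta\tau_k K_k P_k}$ (with the Trotter error controlled by $\Delta\tau$), it suffices to show that applying $e^{-\beta P}$ to an L-DBM state yields another L-DBM state, possibly with additional hidden units, for any Pauli string $P$ and any $\beta \in \mathbb{R}$. The base case is given by hypothesis, and closure of the L-DBM ansatz under every such factor propagates the property through the full Trotter expansion.

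For a diagonal Pauli string $P = \prod_{i \in S} \sigma^z_i$, Theorem \ref{thm:pauli_strings}, applied iteratively as described in the paragraph following its proof, expresses $e^{-\beta P}$ as a product of marginalizations of the form $\sum_h e^{-ih(C + \sum_r W_r \sigma^z_r)}$, with the induced lower-order terms absorbed by further auxiliary fields down to the one-body identity \eqref{eq:1bd}. Multiplying the L-DBM wave function $\sum_{\vec h} e^{i\Phi(\vec z,\vec h)}$ by each such sum appends a new hidden unit whose only nonzero couplings are visible-hidden (the $W_r$) and its own one-body bias ($-C$). No lateral nor visible-visible terms are produced, so the result remains in the L-DBM form of Eq.~\eqref{eq:DBM_lat_ansatz}.

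For a general Pauli string $P$, I would first invoke the basis rotation $e^{-\beta P} = U e^{-\beta P_z} U^{\dagger}$, where $P_z$ is the diagonal counterpart of $P$ obtained by conjugating each $\sigma^x$ or $\sigma^y$ factor into $\sigma^z$ with a suitable $H^x$ or $H^y$ gate acting on a visible qubit (equivalently, Corollary \ref{cor1} may be applied directly). The diagonal kernel $e^{-\beta P_z}$ is handled as above. The remaining task is the sub-lemma that a single-qubit Hadamard-type gate acting on a visible qubit of an L-DBM produces another L-DBM with one additional hidden unit. The proof of this sub-lemma is by inserting a resolution of identity at qubit $r$ and relabeling the integration variable as a new hidden unit $h_0$: using $\langle z'_r | H^x | z_r \rangle = (1/\sqrt 2)\exp[i\pi(1-z_r)(1-z'_r)/4]$ and its $H^y$ analogue, the matrix element factorizes (up to a global phase) into one-body shifts on $z'_r$ and $h_0$ plus a bilinear $\exp[i(\pi/4) z'_r h_0]$ that plays the role of a new visible-hidden coupling $W'_{r0}$. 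Crucially, the pre-existing visible-hidden couplings $z_r W_{rj} h_j$ of the old L-DBM, after the substitution $z_r \to h_0$, become lateral couplings $L_{0j} = W_{rj}$ between $h_0$ and the original hidden units---exactly the role reserved for the $L$ matrix in the L-DBM ansatz.

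The main obstacle is bookkeeping: across the three composite stages---Hadamard conjugation, diagonal exponential, Hadamard unconjugation---one must verify carefully that no forbidden structures (visible-visible couplings, or any cross-layer interactions beyond visible/hidden and lateral hidden/hidden) are ever generated, and that the induced lower-order terms from Theorem \ref{thm:pauli_strings} can always be absorbed recursively without spawning such structures. The key reason this succeeds is that the Hadamard matrix elements are bilinear in the old and new values of a single visible qubit, never coupling two distinct visible registers, while Theorem \ref{thm:pauli_strings} only couples the new auxiliary qubit to visible qubits. Once this is settled, each Pauli-string exponential adds at most $\mathcal{O}(2^M)$ new hidden qubits for an $M$-local term, so finitely many Trotter steps yield an L-DBM whose hidden register grows linearly in the number of steps and in system size for bounded locality, matching the resource scaling claimed elsewhere in the paper and completing the proof.
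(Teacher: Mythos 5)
Your proposal is correct and follows the same overall architecture as the paper's proof: Trotterize, handle diagonal Pauli-string exponentials via the recursive auxiliary-field identities of Theorem~\ref{thm:pauli_strings} (each marginalization appending one hidden unit with only visible--hidden couplings and a bias), and reduce general Pauli strings to the diagonal case by single-qubit basis rotations, so that everything hinges on closure of the L-DBM under $H^x_l$, $H^y_l$, ${H^y_l}^{\dagger}$ --- exactly the content of Lemmas~\ref{lemma:Hx_DBM} and~\ref{lemma:Hy_DBM}. Where you genuinely diverge is in how you establish that closure. The paper (Appendix~\ref{app:Hadamard_action}) \emph{posits} the form of the updated network --- one new hidden unit coupled to $z_l$ and laterally to the old hidden units --- and then solves the matching conditions obtained by evaluating both sides at $z_l=\pm 1$, presenting ``one possible solution'' for the couplings. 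You instead insert a resolution of the identity at qubit $l$ and relabel the summed-over old value of $z_l$ as the new hidden unit $h_0$; the factorization $\langle z'_l|H^x|z_l\rangle\propto\exp\left[i\left(-\tfrac{\pi}{4}z_l-\tfrac{\pi}{4}z'_l+\tfrac{\pi}{4}z_l z'_l\right)\right]$ then reads off the new couplings directly, and --- this is the payoff --- it makes structurally \emph{manifest} why the old visible--hidden couplings $W_{lj}$ migrate into lateral couplings $L_{0j}$ and why no visible--visible terms can ever appear, rather than this emerging as an unexplained feature of a guessed ansatz. (Your coefficients differ from Eq.~\eqref{eq:L-DBM_updated_couplings} by an overall sign flip of $h_0$ and a choice of branch, which is immaterial since the marginalization is symmetric under $h_0\to-h_0$ and the solution is not unique; your $H^y$ matrix element likewise factorizes with real parameters.) Your framing is also somewhat more explicit than the paper's about the Trotter product structure and the per-term qubit cost, which matches the scaling claimed in the text. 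In short: same theorem-level strategy, but a more transparent and arguably more robust derivation of the key lemma.
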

\begin{proof}
    If the interaction $H$ is diagonal in the computational basis, then one may use the recursive procedure described in the previous section to write the imaginary-time propagator $e^{-\tau H}$ in terms of the RBM ansatz, which is itself a special case of the L-DBM.

       For a generic Hamiltonian $H$, we will rotate the imaginary-time propagator to the computational basis using the operators $H^x_i$ and $H^y_i$, as in Eq. (\ref{eq:hadamard_rotations}). Thus, it suffices to show that the action of each rotation operator $(H^x_i,H^y_i,{H^y_i}^{\dag})$ on the L-DBM wave function can be expressed as a new L-DBM, possibly with additional auxiliary qubits and modified couplings.

\begin{lemma}
  \label{lemma:Hx_DBM}
The L-DBM ansatz is closed under the action of the operator $H^x_l$ on visible qubit $l$
 \begin{equation}
     H^x_l\Psi_{\mathcal{L}_0}(\vec{z})=A\Psi_{\mathcal{L}_1}(\vec{z}),
     \label{eq:Hx_action}
 \end{equation}
 where $A$ is a normalization constant.
 \end{lemma}

The proof is provided in Appendix~\ref{app:Hadamard_action}.
The action of $H^x_l$ on the L-DBM wave function can be summarized as follows:
\begin{enumerate}
\item Introduce one new hidden qubit $h_{M+1}$ with couplings
\begin{equation}
\begin{split}
W_{l,M+1}&=\pi/4\\
L_{j,M+1} &= - W_{lj} \\
b_{M+1}&=-\left(a_l+\frac{\pi}{4}\right)\\
\label{eq:L-DBM_updated_couplings}
   \end{split}
\end{equation}
\item Sever all existing couplings between visible qubit $l$ and the hidden layer: $W_{lj}\rightarrow 0$
\item Update the one-body term at index $l$: $a_l\rightarrow \pi/4$
\item The normalization factor is $ A=\exp(-i\pi/4)/\sqrt{2}$.
\end{enumerate}
This procedure is illustrated for a simple L-DBM architecture in Fig. \ref{fig:rbm_hadamard}. We note in passing the similarity of our approach to that outlined for standard DBM networks in Ref.~\cite{Carleo:2018}. However, there are significant distinctions: unlike the reference, we devise procedures that work generically (for all operators), and our method preserves the unitarity of the architecture.
In other words, if the initial L-DBM is unitary (that is, if the couplings are all real), then the updated L-DBM is also unitary. 
(The normalization factor $A$ is generically complex, but it need not be implemented on quantum hardware.)

An analogous statement to Lemma \ref{lemma:Hx_DBM} can be proven for the rotation operators $H^y$ and ${H^y}^\dag$:
\begin{lemma}
\label{lemma:Hy_DBM}
The L-DBM ansatz is closed under the action of the rotation operators $H^y_l$ and ${H^y_l}^{\dag}$ on visible qubit $l$
 \begin{equation}
 \begin{split}
     H^y_l\Psi_{\mathcal{L}_0}(\vec{z})&=A_1\Psi_{\mathcal{L}_1}(\vec{z}),\\
     {H^y_l}^\dag\Psi_{\mathcal{L}_0}(\vec{z})&=A_2\Psi_{\mathcal{L}_2}(\vec{z}),
     \end{split}
     \label{eq:Hy_action}
 \end{equation}
 for some normalization constants $A_1$, $A_2$ and real-valued L-DBM parameters $\mathcal{L}_1$, $\mathcal{L}_2$.
 \end{lemma}
This statement can be proven using the same technique employed for $H^x_l$. The algorithmic steps required for updating the L-DBM in these cases are given in Appendix \ref{app:hadamard}. Each rotation operator requires the introduction of a single auxiliary qubit. The number of auxiliary qubits (hidden units) is equal to the number of interactions. Thus, for a local Hamiltonian the scaling of the L-DBM is linear with system size and also linear in the number of Trotter steps.

\begin{figure*}
    \centering
    \includegraphics[scale=0.65]{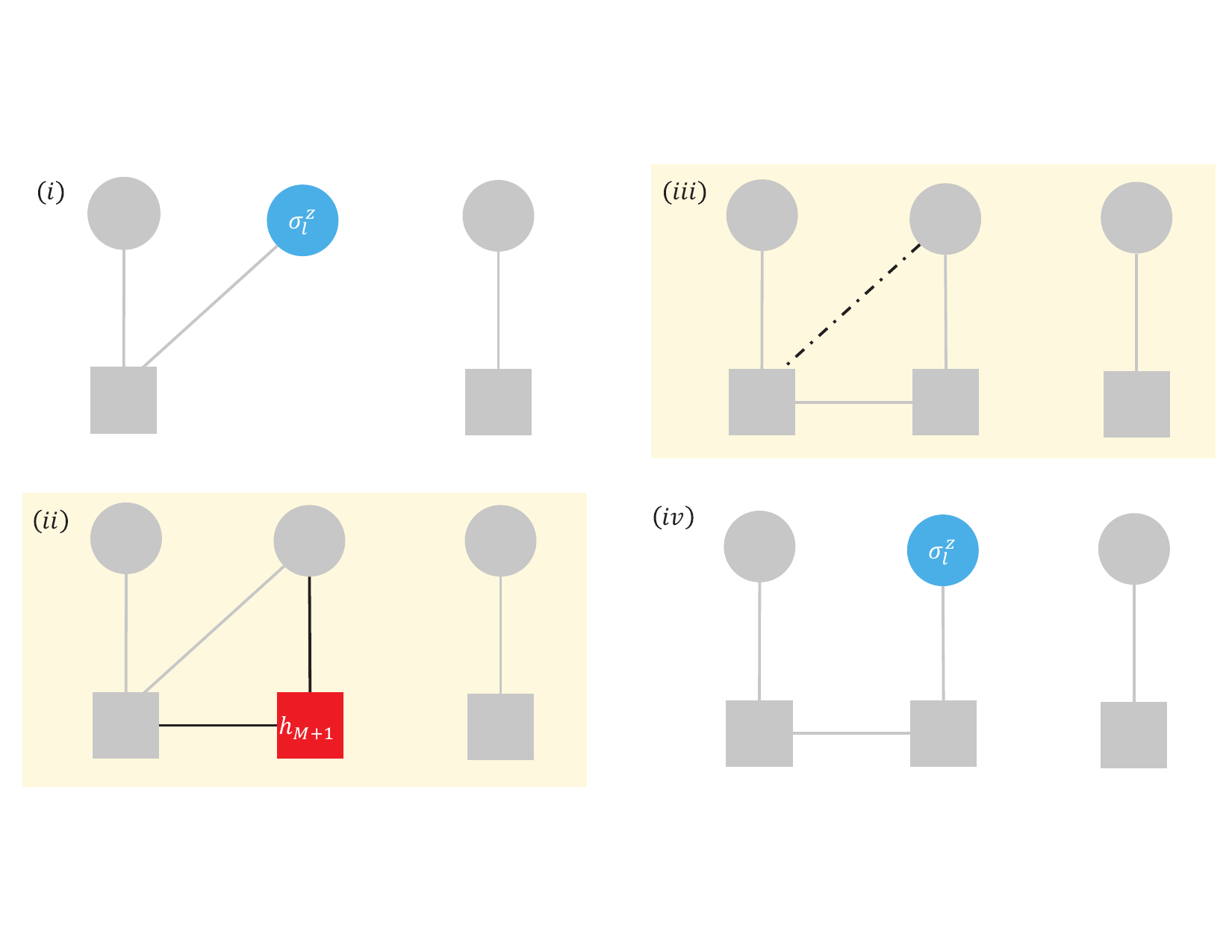}
    \caption{Illustration of the action of the rotation operators $(H, H^y, {H^y}^{\dag})$ on the L-DBM ansatz. (i) Rotation operator is applied to visible qubit $l$. (ii) Add one hidden qubit $h_{M+1}$ that couples to visible qubit $l$ and couples laterally to all hidden units currently connected to visible qubit $l$. (iii) Remove all couplings between visible qubit $l$ and existing hidden layer qubits. (iv) Adjust the one-body coupling of visible qubit $l$. The exact values of the required couplings differ depending on the operator under consideration (see text).}
    \label{fig:rbm_hadamard}
\end{figure*}

By rotating to the computational basis using the rotation operators and then applying Theorem \ref{thm:pauli_strings}, we can effectively write
\begin{equation}
    e^{-\tau H}\Psi_{\mathcal{L}_0}(\vec{z})=A\Psi_{\mathcal{L}_1}(\vec{z}).
\end{equation}
Thus, the L-DBM ansatz can represent any state obtained by imaginary time evolution of an initial L-DBM state.
\end{proof}
In fact, we can prove that an even stronger statement about the generality of the L-DBM ansatz:

\begin{theorem}
\label{thm:ldbm_universal_gates}
    The L-DBM ansatz (with real coefficients) is a universal representation of quantum states.
    \begin{proof}
        Single- and double-qubit rotations furnish a universal set of quantum operators. Clearly, a rotation of qubit $i$ by angle $\phi$ with respect to the $z$ axis can be absorbed in the L-DBM ansatz by taking $a_i\rightarrow a_i + \phi$. By Lemmas \ref{lemma:Hx_DBM} and \ref{lemma:Hy_DBM}, the L-DBM ansatz is also closed under single-qubit rotations around the $x$ and $y$ axes. 
        
        In the same way, it suffices to prove that the 2-qubit rotation $\exp(-i\phi\sigma_1^z\sigma_2^z)$ can be absorbed into the L-DBM wave function. One particular representation is
        \begin{equation}
           \begin{split}
           &e^{-i\phi\sigma_1^z\sigma_2^z} \\
           &= A e^{i\frac{\pi}{4}(\sigma_1^z+\sigma_2^z)}\sum_{h_1,h_2}\exp\bigg[i\frac{\pi}{4}\left(\sigma_1^z+\sigma_2^z\right)h_1+i\frac{\pi}{4}h_1h_2\\
           &\hspace*{4cm}-i\frac{\pi}{4}h_1+i\left(\phi+\frac{\pi}{4}\right)h_2\bigg],
        \end{split}
        \end{equation}
        where
        \begin{equation}
        \begin{split}
            A&=\frac{1}{2}e^{i\pi/4}.\\
            \end{split}
        \end{equation}
We can summarize the action of $e^{-i\phi \sigma_1^z\sigma_2^z}$ in terms of the L-DBM ansatz as
\begin{enumerate}
    \item Shift the existing values of the visible qubit bias terms $a_{i}\rightarrow a_{i} +\pi/4$ for $i=1,2$.
    \item Introduce 2 new hidden qubits $h_1$, $h_2$ with real coupling parameters
    \begin{equation}
    \begin{split}
        W_{11}&=W_{21}=L_{12}=-b_1=\pi/4,\\
        b_2&=\phi+\pi/4.
        \end{split}
    \end{equation}
\end{enumerate}
    \end{proof}
\end{theorem}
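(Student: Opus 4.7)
The plan is to reduce the statement to closure of the L-DBM ansatz under a universal gate set. A convenient choice is the set of single-qubit rotations $R_\alpha(\phi) = \exp(-i\phi\sigma^\alpha/2)$ for $\alpha \in \{x,y,z\}$, together with any entangling two-qubit gate, for which $\exp(-i\phi\sigma^z_1\sigma^z_2)$ is the most natural choice given the structure of Theorem \ref{thm:pauli_strings}. It then suffices to show that, for each of these four gates, one can update the couplings $\mathcal{L} \to \mathcal{L}'$ (possibly appending a small number of hidden qubits) in such a way that all parameters remain real-valued and the new wave function equals the gate applied to the old one, up to an overall (possibly complex) prefactor.

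The $z$-axis single-qubit rotation is immediate. Applying $R_z(\phi)$ to visible qubit $l$ multiplies $\Psi_{\mathcal{L}_0}(\vec{z})$ by $\exp(-i\phi z_l/2)$, which is absorbed into the existing visible-layer bias by the real shift $a_l \to a_l - \phi/2$. For rotations about the $x$ and $y$ axes, the plan is to use the decompositions $R_x(\phi) = H^x R_z(\phi) H^x$ and $R_y(\phi) = H^y R_z(\phi) (H^y)^\dagger$, combined with Lemmas \ref{lemma:Hx_DBM} and \ref{lemma:Hy_DBM}, which already guarantee that the L-DBM is closed under the Hadamard-type operators with real couplings. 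Composition of real-preserving updates is real-preserving, so these cases require no new work.

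The only substantive step is the two-qubit gate. The approach is to \emph{guess} an ansatz for the L-DBM representation and verify it by direct marginalization. Specifically, introduce two new hidden qubits $h_1, h_2$ with visible/hidden couplings $W_{11} = W_{21} = \pi/4$ (i.e., $h_1$ couples equally to the two visible qubits involved), a single lateral coupling $L_{12} = \pi/4$ between the new hidden qubits, no lateral couplings to any previously existing hidden qubit, and biases $b_1 = -\pi/4$, $b_2 = \phi + \pi/4$; simultaneously shift $a_1, a_2 \to a_1 + \pi/4, a_2 + \pi/4$. To verify this, I would perform the sum over $h_2 = \pm 1$ first, which produces a cosine of $(\pi/4)h_1 + \phi + \pi/4$, and then the sum over $h_1 = \pm 1$, using product-to-sum identities. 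The resulting expression should simplify to $\exp(-i\phi\sigma^z_1\sigma^z_2)$ times a state-independent constant $A$ and the absorbed single-qubit phase on qubits $1$ and $2$, all of which can be reassigned to the overall normalization and the shifted $a_i$ terms.

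The main obstacle is verifying this two-qubit identity algebraically: one must check that, after the double sum collapses, no residual $\sigma^z_1 + \sigma^z_2$ or $\sigma^z_1 - \sigma^z_2$ terms survive beyond those compensated by the chosen $a_i$ shifts, and that the only non-constant dependence on the eigenvalues $z_1, z_2$ is the desired $\exp(-i\phi z_1 z_2)$. Equivalently, one can view this as solving a $4$-equation matching condition (one per computational basis state of qubits $1,2$) in the four parameters $W_{11}, W_{21}, L_{12}, b_1, b_2$ (with the $a_i$ shifts providing additional freedom); the existence of a real-valued solution is what the ansatz above asserts. Once this matching is confirmed, the theorem follows by composing gates: any unitary is, up to arbitrary precision, a product of the gates handled above, and each appends only $O(1)$ new hidden qubits and preserves reality of all couplings.
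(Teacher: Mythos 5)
Your proposal takes essentially the same route as the paper: $z$-rotations are absorbed into the visible biases, $x$- and $y$-rotations are handled by conjugating $R_z$ with $H^x$, $H^y$ via Lemmas \ref{lemma:Hx_DBM} and \ref{lemma:Hy_DBM}, and the entangling gate $e^{-i\phi\sigma_1^z\sigma_2^z}$ is represented with two new hidden qubits carrying exactly the couplings $W_{11}=W_{21}=L_{12}=-b_1=\pi/4$, $b_2=\phi+\pi/4$ together with the $\pi/4$ shifts of $a_1,a_2$ that the paper uses. The verification you defer does go through by summing over $h_2$ and then $h_1$ as you describe (yielding a state-independent normalization $e^{-i\pi/4}/2$), so the proposal is correct.
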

Combining the results of Theorems \ref{thm:dbm_universal} and \ref{thm:ldbm_universal_gates}, one can use arbitrary quantum gate operations to prepare a desired initial state $\ket{\Psi_0}$ and then represent the action of imaginary time evolution upon that state, all within the L-DBM ansatz.

\subsection{L-DBM as a DBM}
In the previous section, we introduced the L-DBM architecture and showed that it is universal. In fact, the L-DBM is a special case of the usual DBM ansatz. 

\begin{theorem}
    The L-DBM wave-function ansatz [Eq. (\ref{eq:DBM_lat_ansatz})] is a special case of the DBM ansatz [Eq. (\ref{eq:DBM_ansatz})].
\end{theorem}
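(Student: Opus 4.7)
\medskip\noindent\textbf{Proof proposal.} My plan is to eliminate the lateral hidden-hidden couplings in the L-DBM one pair at a time, replacing each one by a freshly-introduced auxiliary qubit that plays the role of a deep-layer neuron. Because the exponent of the L-DBM wave function in Eq.~(\ref{eq:DBM_lat_ansatz}) is a sum of commuting classical scalars (the hidden variables $h_i \in \{\pm 1\}$ are just summation indices), the lateral contribution factorizes cleanly as
\begin{equation*}
\exp\!\Big[i\sum_{i<j} h_i L_{ij} h_j\Big]=\prod_{i<j}\exp(iL_{ij}h_i h_j),
\end{equation*}
so it suffices to handle a single factor $\exp(iL_{ij}h_i h_j)$.

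For each pair $(i,j)$ with $L_{ij}\neq 0$, I would apply the two-body auxiliary-field identity of Corollary~\ref{cor1}, Eq.~(\ref{eq:2body_id}), with the formal substitution $K^{(2)}\to -iL_{ij}$, treating $h_i$ and $h_j$ as eigenvalues of mutually commuting $\sigma^z$ operators. This rewrites $\exp(iL_{ij}h_i h_j)$ as a scalar prefactor times a marginalization over a brand new $\pm 1$ variable $d_{(ij)}$ whose exponent is \emph{linear} in $h_i$, $h_j$, and $d_{(ij)}$. Relabeling these new variables $d_{(ij)}\to d_k$ and assembling the induced two-body contributions into a matrix $\overset{\text{\tiny$\bm\leftrightarrow$}}{W'}$ (together with any constant pieces absorbed into new biases $\vec b'$), the lateral block of the L-DBM exponent is replaced by $i\sum_{i,k}h_i W'_{ik}d_k + i\sum_k b'_k d_k$. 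The visible-hidden couplings $\overset{\text{\tiny$\bm\leftrightarrow$}}{W}$, the visible biases $\vec a$, and the hidden biases $\vec b$ are untouched by this procedure, and by construction no coupling of the form $z_i d_j$ or $d_i d_j$ appears. The resulting expression is manifestly of the DBM form in Eq.~(\ref{eq:DBM_ansatz}).

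The main obstacle I anticipate is bookkeeping rather than conceptual. The substitution $K^{(2)}\to -iL_{ij}$ takes the explicit formulas of Eq.~(\ref{eq:2bd}) out of the real regime in which they were derived, so the induced couplings $\overset{\text{\tiny$\bm\leftrightarrow$}}{W'}$ and the deep biases $\vec b'$ will in general be complex-valued. This is compatible with the DBM ansatz as stated, which does not require real parameters, but it does mean that a manifestly unitary L-DBM will not in general be mapped to a manifestly unitary DBM under this identification; the two ansatzes coincide only at the level of the general (complex) parameter space. I would close the argument by checking that the number of introduced deep qubits equals the number of nonzero lateral couplings---which matches the schematic DBM$\leftrightarrow$L-DBM equivalence drawn in Fig.~\ref{fig:dbm_ldbm_diagrams}---and by noting that the overall normalization constant $A'=\prod_{i<j}A_{ij}$ can be carried as a harmless prefactor without affecting the DBM structure.
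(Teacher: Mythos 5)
Your proof is correct, but it inverts the direction of the paper's construction. The paper fixes a laterally coupled hidden unit $h_j$ and uses the two-body identity of Eq.~(\ref{eq:2body_id}) to strip away its \emph{visible} couplings $z_iW_{ij}h_j$, introducing one new intermediary hidden unit per such coupling; once $h_j$ no longer touches the visible layer it is simply relabeled as a deep unit, and its surviving lateral couplings become hidden--deep couplings $W'$. You instead leave every hidden unit where it is and use the same identity to strip away each \emph{lateral} coupling $L_{ij}$, introducing one fresh deep unit per nonzero $L_{ij}$. Both routes rest on the same auxiliary-field identity and both generically force complex parameters (your matching condition reads $\cos(2W')=e^{2iL_{ij}}$, which has no real solution for generic real $L_{ij}$), which, as you correctly note, is permitted by Eq.~(\ref{eq:DBM_ansatz}) even though it sacrifices manifest unitarity. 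Your version has one concrete advantage: because each lateral coupling is individually absorbed into a brand-new deep unit, you never produce a residual coupling between two units that have both landed in the deep layer --- a configuration the DBM forbids, and one the paper's relabeling argument does not explicitly rule out when two laterally coupled units each also touch the visible layer. The price is resource bookkeeping: you add up to $O(M^2)$ deep units (one per nonzero $L_{ij}$), whereas the paper adds one new hidden unit per visible coupling of each laterally coupled unit; both counts remain polynomial, so the conclusion is unaffected.
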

\begin{proof}
    The L-DBM ansatz contains hidden units that couple laterally to other hidden units as well as to the visible layer. Let $h_j$ be a particular hidden unit of the L-DBM. We can distinguish 3 possible cases:
    \begin{enumerate}
    \item  If $L_{ij}=0$ for all $i\neq j$, then hidden unit $j$ has no lateral couplings. We identify it as a standard hidden unit in a DBM.
    \item If $W_{ij}=0$ for all $1\leq i\leq N$, then hidden unit $j$ does not couple to the visible layer. We identify it as a standard deep unit in a DBM.
    \item If hidden unit $j$ couples to the visible layer and to other hidden units, we can remove each visible layer coupling $z_iW_{ij}h_j$ by introducing a new hidden unit $\tilde{h}$. It follows from Eq. (\ref{eq:2body_id}) that we can write
    \begin{equation}
e^{z_iW_{ij}h_j}=A\sum_{\tilde{h}}e^{\tilde{W}\left(z_i+h_j\right)\tilde{h}}
    \end{equation}
    for some coefficients $A$, $\tilde{W}$. Recursively applying this argument, we can completely decouple $h_j$ from the visible layer. We can then reinterpret $h_j$ as a deep unit of the DBM, which couples only to the hidden layer.
    \end{enumerate}

\end{proof}

\section{Numerical example: Transverse Ising model}
\label{sec:numerics}
As an illustration, in this section we implement the unitary RBM representation of the imaginary-time propagator for the one-dimensional transverse Ising model, 
\begin{equation}
    \begin{split}
    H =&\sum_{i} \sigma^z_i \sigma^z_{i+1} - \sum_i \sigma^x_i
    \end{split}
\end{equation}
for a system of three qubits with periodic boundary conditions. Starting from the initial state $\ket{\Psi_0}=\ket{+1_x}\ket{+1_x}\ket{+1_x}$, we perform a series of second-order Trotter steps
\begin{equation}
   U^{(2)}(d\tau)=e^{d\tau/2 \sum_i \sigma^x_i}e^{-d\tau \sum_i \sigma^z_i \sigma^z_{i+1}}e^{d\tau/2 \sum_i \sigma^x_i}. 
\end{equation}
with a single auxiliary qubit that is reset after every application of the RBM identity.
We set the imaginary time step to be $d\tau=0.01$.

Using the Qiskit python package~\cite{Qiskit}, we simulate the resulting quantum circuit for various values of the total imaginary time $\tau$. For each value of $\tau$, we perform $10^6$ simulations in 100 batches of $10^4$ runs each. Within each batch, we compute the expectation values for the energy $\braket{H}$ and the two terms in the Hamiltonian $\braket{ZZ}$ and $\braket{X}$. Then, across all 100 batches, we compute the mean of each expectation value and estimate the corresponding uncertainty of the mean using the jackknife method~\cite{Miller:1974}. We verified that using the bootstrap method results in essentially the same confidence interval, and that combining the samples from all batches into a single batch results in the same confidence interval. The results are shown in Fig.~\ref{fig:Ising_numerics} (a), where the displayed uncertainty corresponds to one standard deviation about the mean value. Despite the large number of simulations in each batch, the actual number of valid samples used for statistics is given by the product of the number of simulations multiplied by the acceptance rate. For instance, for $\tau=1$ after post-selection, the number of effective samples is $\approx 57$ per batch, giving rise to relatively high uncertainty.  On near-term quantum devices, we expect higher uncertainty due to hardware noise, and we leave such studies to future work. 
\begin{figure}[h]
\subfloat[Expectation value of the Hamiltonian as a function of imaginary time $\tau$. The black points indicate the mean value across all 100 batches of simulations performed at each value of $\tau$, and the error bars show the corresponding uncertainty (one standard deviation). The black dashed line is the exact expectation value of the imaginary-time-evolved state.  Insets show the respective values for the $X$ and $ZZ$ terms in the Hamiltonian. The dotted line denotes the ground-state energy. ]{\includegraphics[width=0.5\textwidth]{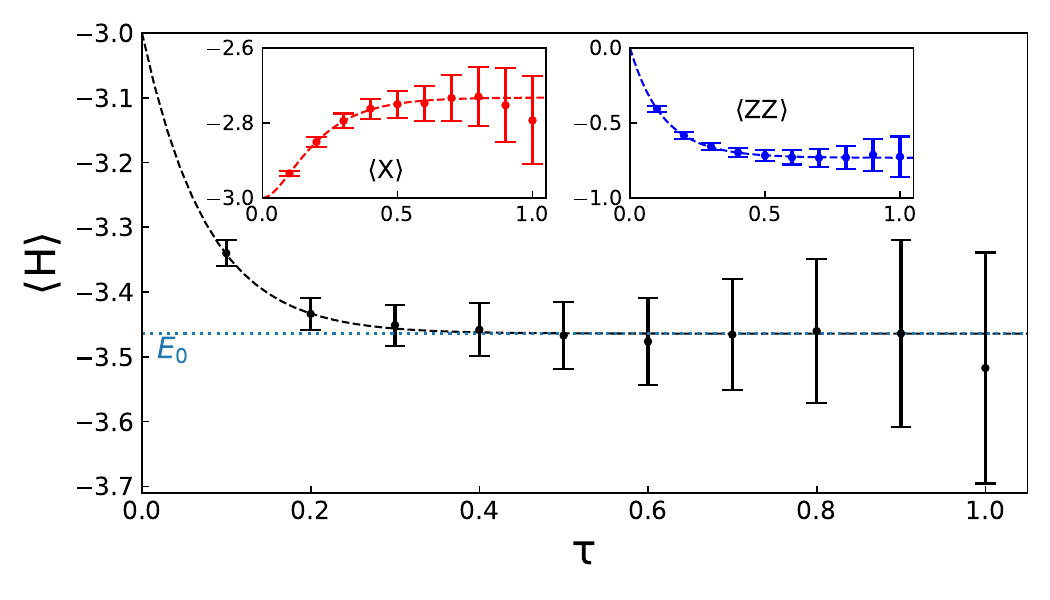}
}\\
\subfloat[Cumulative success probability of the auxiliary-qubit encoding as a function of imaginary time $\tau$. Black circles denote the empirical value determined from our numerical simulations, where it is computed by the ratio of the number of samples where the identity was corrected implemented to the total number of samples. For comparison, the dotted line shows the average success probability given by Eq. \eqref{eq:avg_Ps_ising}. The empirical success probability can be well approximated by the dashed line ($P_s(\tau)=10^{-\lambda \tau}, \ \lambda = \ln(10)$).]{\includegraphics[width=0.5\textwidth]{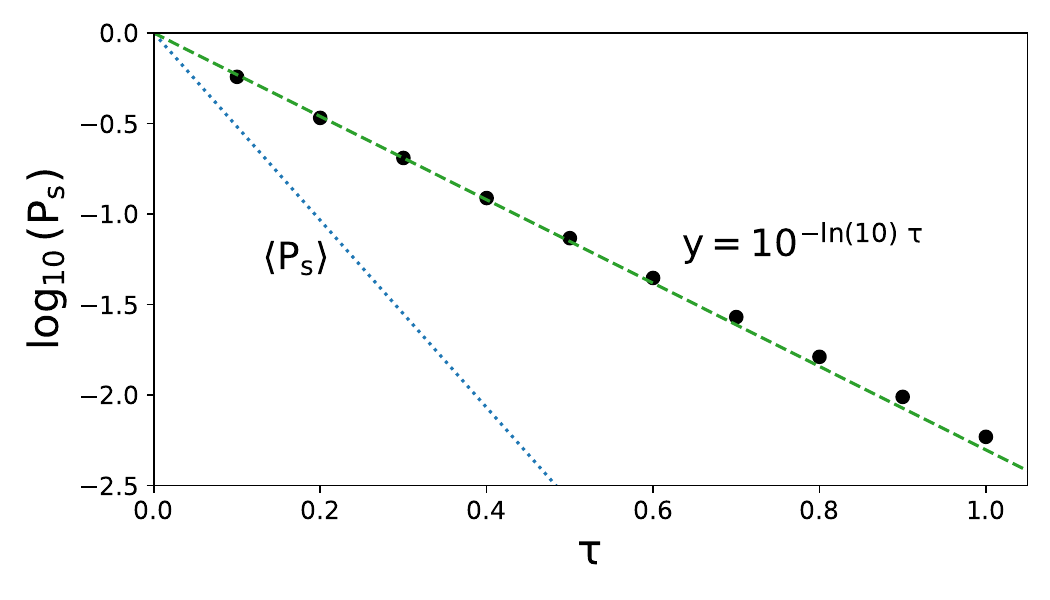}}
\caption{One dimensional transverse Ising model at the critical point with three qubits and periodic boundary conditions at finite temperature. (a) Energy expectation value as function of imaginary time (inverse temperature) and (b) success probability for post-selection.} 
\label{fig:Ising_numerics}
\end{figure}

As the probability of success decreases with the number of imaginary time steps, we display in Fig. \ref{fig:Ising_numerics} (b) the fraction of samples accepted with post-selection. As expected, the acceptance rate decreases exponentially with imaginary time. For comparison, we also compute the average success probability

\begin{equation}
\langle P_s\rangle(\tau) =[\langle P_s^x \rangle^6 (d\tau/2) \langle P_s^z\rangle^3 (d\tau)]^{\tau/d\tau},
\label{eq:avg_Ps_ising}
\end{equation}
where $\braket{P_s^x}(d\tau/2)$ and $\braket{P_s^z}(d\tau)$ are given by Eq. \eqref{eq:avg_Ps_2bd} with $K^{(2)}=d\tau/2$ and $d\tau$, respectively. As Fig. \ref{fig:Ising_numerics} (b) shows, the actual success probability is much better than the average success rate, which is due to the choice of the initial wave function. Indeed, $\Psi_0$ is the ground state of the $\sigma^x$ terms in the Hamiltonian. This means that the first application of the auxiliary qubit identity will never fail, and subsequent applications will succeed with much higher than average probability. The empirical values can be approximated by $P_s(\tau)=10^{-\lambda \tau}, \ \lambda = \ln(10)$, depicted by the dashed line.

\section{Conclusion and Discussion}
\label{sec:conclusion}
In recent years, improvements in quantum devices have paved the way for quantum computing technology and the development of algorithms to tackle problems otherwise unfeasible on classical computers. One such case is the study of quantum many-body systems characterized by a Hilbert-space size that scales exponentially with the degrees of freedom. Monte Carlo methods have proven successful for probing the thermal and/or ground-state properties of such systems, but they are limited due the sign problem. Performing similar computations on quantum hardware should be free of such disadvantages. Quantum computing methods like VQE employ a variational ansatz with free parameters that require optimization, in similar fashion to neural networks in classical machine learning. 

In this work, we devised a universal ansatz for the imaginary-time propagator and wave function of all qubit systems by means of unitary RBM and DBM architectures, respectively.  By explicit construction, we proved their universality and provided expressions for the parameters that require no optimization. Our ansatzes can be analyzed through the lenses of block encoding and auxiliary qubits, and we provided post-selection success probabilities as functions of the initial state and non-unitary coupling parameters. 
The number of auxiliary qubits grows linearly with the system size and total imaginary time. Both networks can be easily implemented on quantum hardware by measuring and resetting auxiliary qubits. As such, only one auxiliary qubit is actually needed, and the resulting circuit depth scales linearly with imaginary time and system size, while the width is constant. Alternatively, one can implement a number of auxiliary qubits linearly proportional to system size, and circuit depth grows linearly with imaginary time only.

\begin{acknowledgments}
We thank Kenneth McElvain, Alexander Kemper, and Omar Alsheikh for fruitful discussions and suggestions. ER is supported by the National Science Foundation under cooperative agreement 2020275 and by the U.S. Department of Energy through the Los Alamos National Laboratory. This research utilized resources of the National Energy Research Scientific Computing Center (NERSC), a Department of Energy Office of Science User Facility. 
\end{acknowledgments}

\appendix

\section{Expression for $K^{(M)}$ in terms of RBM couplings}
\label{app:rbm_coupling}
Suppose $M$ is even and let $C=0$, $W_r=W$ for all $1\leq r\leq M$. Upon inverting Eq. (\ref{eq:matching}), the highest-order coupling parameter $K^{(M)}$ can be expressed as
\begin{equation}
\begin{split}
    K^{(M)}&=\frac{1}{2^M}\sum_{k=0}^M (-1)^{k}~^MC_k\log\big\{2\cos\big[(2k-M)W\big]\big\}\\
    &=\frac{1}{2^M}\log\Bigg\{\frac{\prod_{k~\mathrm{even}}\cos[(2k-M)W]^{^MC_k}}{\prod_{k~\mathrm{odd}}\cos[(2k-M)W]^{^MC_k}}\Bigg\},
    \end{split}
    \label{eq:KM_solve}
\end{equation}
where $^MC_k$ is the usual binomial coefficient. The right-hand side is zero when $W=0$. The highest-frequency cosine function will appear in the numerator, the term with $k=M$; the first zero of this function occurs at $W=\pi/(2M)$, at which point $K^{(M)}$ diverges to $-\infty$. The right-hand side of Eq. (\ref{eq:KM_solve}) is continuous for $0\leq W < \pi/(2M)$, and all other cosine factors are strictly positive in this domain. Thus we can always find a value of $W$ that reproduces any $K^{(M)}$ in the range $(-\infty,0]$. 

For values of $K^{(M)}>0$, we flip the sign of a single RBM coupling $W_s \rightarrow -W$ relative to the other $M-1$ couplings. The net effect of this change is to invert the ratio in the second line of Eq. (\ref{eq:KM_solve}), changing the range of $K^{(M)}$ to $[0,+\infty)$.

When $M$ is odd, if we choose $C=W_r=W$ for all $1\leq r\leq M$, we recover the same form as Eq. (\ref{eq:KM_solve}) with $M\rightarrow M+1$ on the right-hand side. Repeating the same arguments as in the case of even $M$ is then straightforward.

\section{RBM representation of 4-Local interaction}
\label{app:4bdy_id}
In the case of four-body interaction, we can write
\begin{equation}
\begin{split}
    \exp\bigg[&-\Big(K^{(4)}\sigma_1\sigma_2\sigma_3\sigma_4+K^{(2)}_{12}\sigma_1\sigma_2+K^{(2)}_{13}\sigma_1\sigma_3\\
    &+K^{(2)}_{14}\sigma_1\sigma_4+K^{(2)}_{23}\sigma_2\sigma_3+K^{(2)}_{24}\sigma_2\sigma_4+K^{(2)}_{34}\sigma_3\sigma_4]\Big)\bigg]\\
    &=A\sum_{h=\pm 1}e^{-i(W_1\sigma_1+W_2\sigma_2+W_3\sigma_3+W_4\sigma_4)h},
    \end{split}
\end{equation}
where one possible choice of couplings is
\begin{equation}
    \begin{split}
        W\equiv W_1=W_2=W_3&=\frac{1}{2}\tan^{-1}\left[\left(1-e^{-8|K^{(4)}|}\right)^{1/4}\right],\\
W_4&=s W,\\
A&=\frac{1}{2}\left[\sec^4(2W)\sec(4W)\right]^{1/8},
    \end{split}
\end{equation}
with $s=\mathrm{sign}(K^{(4)})$ and only two-body couplings are induced
\begin{equation}
    \begin{split}
    K^{(2)}_{12}=K^{(2)}_{13}=K^{(2)}_{23}&=-\frac{1}{8}\log[\cos(4W)],\\
        K^{(2)}_{14}=K^{(2)}_{24}=K^{(2)}_{34}&=-\frac{s}{8}\log[\cos(4W)].
    \end{split}
    \end{equation}

\section{Derivation of Hadamard Actions}
\label{app:Hadamard_action}
    \begin{proof}
    The action of Hadamard operator $H^x$ at index $l$ is
    \begin{equation}
        \begin{split}
        &H^x_l \Psi_{\mathcal{L}_0}(\vec{z}) = \frac{1}{\sqrt{2}} z_l \Psi_{\mathcal{L}_0}(\vec{z})  + \frac{1}{\sqrt{2}}\Psi_{\mathcal{L}_0} (z_1,...-z_l,..)\\
        &= \frac{1}{\sqrt{2}}\sum_{\vec{h}} P_1(\vec{z},\vec{h})P_3(\vec{h}) \left(z_l + e^{-2 i a_l z_l - 2 i z_l \sum_{j} W_{lj}h_j }\right),
        \end{split}
        \label{eq:Hx_action_2}
    \end{equation}
    where we have introduced the abbreviations
\begin{equation}
\begin{split}
    P_1(\vec{z},\vec{h}) &\equiv \exp\bigg[i \left( \sum_i a_i z_i + \sum_{i,j}z_i W_{ij}h_j \right) \bigg], \\
    P_2(\vec{h})&\equiv \exp\left[ i \left( \sum_{i<j}h_iL_{ij}h_j+\sum_i b_ih_i \right) \right].
    \end{split}
\end{equation}
Our aim is to express this result in terms of a new L-DBM. We assume that the weights of the hidden units linked to visible qubit $l$ are modified as
\begin{equation}
W_{lj}\rightarrow W_{lj}+\Delta_{lj},~~~~~a_l \rightarrow a_l + \delta_l.
\end{equation}
We introduce a new hidden unit $h_{M+1}$, connected to visible qubit $l$ and laterally connected to all existing hidden units. Our L-DBM ansatz for the resulting wave function is
\begin{widetext}
\begin{equation}
    \begin{split}
    \Psi_{\mathcal{L}_1}(\vec{z}) &= \sum_{\vec{h}}P_1(\vec{z},\vec{h})P_2(\vec{h})
    \exp\left[\delta_l z_l+z_l \sum_{j=1}^M \Delta_{lj}h_j\right]\\
    &\times\sum_{h_{M+1}}\exp\left[z_l W_{l,M+1}h_{M+1}+\sum_{j=1}^M h_jL_{j,M+1} h_{M+1} + b_{M+1}h_{M+1}\right].
    \end{split}
    \label{eq:L-DBM_update}
\end{equation}
Substituting Eqs. (\ref{eq:Hx_action_2}) and (\ref{eq:L-DBM_update}) into Eq. (\ref{eq:Hx_action}) and considering the possible values $z_l=\pm 1$ yields two constraints: When $z_l=+1$ we find
\begin{equation}
\begin{split}
1+\exp\left(-2 a_l -2 \sum_{j=1}^M W_{lj} h_j\right) =A\Bigg\{&\exp\left[\left(\delta_l+ W_{l,M+1}+ b_{M+1}\right)+\sum_{j=1}^M \left(\Delta_{lj} +  L_{j,M+1}\right) h_j\right]\\
+&\exp\left[\left(\delta_l- W_{l,M+1}-b_{M+1}\right)+\sum_{j=1}^M \left(\Delta_{lj} -  L_{j,M+1}\right) h_j]\right]\Bigg\},
\end{split}
\end{equation}
and when $z_l=-1$, we find
\begin{equation}
\begin{split}
 -1+\exp\left(2 a_l + 2 \sum_{j=1}^M W_{lj} h_j\right) =A\Bigg\{&\exp\left[\left(-\delta_l - W_{l,M+1}+ b_{M+1}\right) - \sum_{j=1}^M \left(\Delta_{lj}  - L_{j,M+1}\right) h_j\right]\\
 +&\exp\left[(-\delta_l  + W_{l,M+1}- b_{M+1} ) - \sum_{j=1}^M (\Delta_{lj} + L_{j,M+1}) h_j\right]\Bigg\}.
\end{split}
\end{equation}
\end{widetext}
One possible solution is
\begin{equation}
\begin{split}
\delta_l &=  \pi/4 - a_l\\
\Delta_{lj} &= - W_{lj} 
   \end{split}
\end{equation}
\begin{equation}
\begin{split}
W_{l,M+1}&=\pi/4\\
L_{j,M+1} &= - W_{lj} \\
b_{M+1}&=-\left(a_l+\frac{\pi}{4}\right)\\
\label{eq:L-DBM_updated_couplings_appendix}
   \end{split}
\end{equation}
\begin{equation}
     A=e^{-i\pi/4}/\sqrt{2}
\end{equation}
Thus we have demonstrated by construction that one can rewrite the action of $H^x_l$ on the L-DBM wave function in terms of a different L-DBM (up to normalization).
\end{proof}

\vspace{1cm}
\section{Action of $H^y$, ${H^y}^{\dag}$ Operators on L-DBM}
\label{app:hadamard}
The action of the rotation operator $H^y_l$ on the L-DBM ansatz can be summarized as follows:
\begin{enumerate}
\item Introduce one new hidden qubit $h_{M+1}$, which couples to all existing hidden units linked to $\sigma^z_l$
\begin{equation}
    \begin{split}
        W_{l,M+1}&=i\pi/4\\
        L_{j,M+1}&=-W_{lj}\\
        b_{M+1}&=-a_l.
         \end{split}
    \end{equation}
\item Sever all existing couplings between visible qubit $l$ and the hidden layer: $W_{lj}\rightarrow 0$
\item Update the one body term at index $l$: $a_l\rightarrow -i\pi/4$
\item Multiply by the normalization factor $ C=1/\sqrt{2}$.
\end{enumerate}
The action of conjugate rotation operator ${H^y_l}^\dag$ on the L-DBM ansatz can be summarized as follows:
\begin{enumerate}
\item Introduce one new hidden qubit $h_{M+1}$, which couples to all existing hidden units linked to $\sigma^z_l$
\begin{equation}
    \begin{split}
        W_{l,M+1}&=-\left(a_l+i\pi/4\right)\\
        L_{j,M+1}&=-W_{lj}\\
        b_{M+1}&=-a_l.
         \end{split}
    \end{equation}
\item Sever all existing couplings between visible qubit $l$ and the hidden layer: $W_{lj}\rightarrow 0$
\item Update the one body term at index $l$: $a_l\rightarrow 0$
\item Multiply by the normalization factor $ C=1/\sqrt{2}$.
\end{enumerate}

\end{document}